\newtheorem{theorem}{Theorem}[section]
\newtheorem{lemma}[theorem]{Lemma}
\newtheorem{definition}[theorem]{Definition}
\newtheorem{remark}[theorem]{Remark}
\DeclareMathOperator{\sign}{sign}
\DeclareMathOperator{\signc}{sign_{\bb C}}
\DeclareMathOperator{\supp}{supp}
\DeclareMathOperator{\diag}{diag}
\DeclareMathOperator*{\argmin}{arg\min}
\newcommand{\bs}{\boldsymbol}
\newcommand{\bb}{\mathbb}
\newcommand{\cl}{\mathcal}
\newcommand{\scp}[2]{\langle #1,\, #2 \rangle}
\newcommand{\inv}[1]{\frac{1}{#1}}
\newcommand{\ts}{\textstyle}
\newcommand{\ie}{\emph{i.e.},\xspace}
\newcommand{\eg}{\emph{e.g.},\xspace}
\newcommand{\im}{{\sf i}}
\newcommand{\Id}{{\rm \bf I}}
\newcommand{\iid}{%
    \ifmmode% math mode
        \mathrm{i.i.d.}%
    \else%
        i.i.d.\@\xspace%
    \fi%
}
\newcommand{\st}{%
    \ifmmode% math mode
        \quad\mathrm{s.t.}\quad%
    \else%
        s.t.\@\xspace%
    \fi%
}
\newcommand{\whp}{\mbox{w.h.p.\@}\xspace}
\title{The importance of phase\\ in complex compressive sensing}
\author{Laurent Jacques$^*$ and Thomas Feuillen\thanks{LJ and TF are with the ICTEAM institute in UCLouvain,
    Belgium. E-mail: \url{{laurent.jacques,thomas.feuillen}@uclouvain.be}. LJ is funded by the Belgian F.R.S.-FNRS.}}
\begin{document}
\maketitle

\begin{abstract}
We consider the question of estimating a real low-complexity signal (such as a sparse vector or a low-rank matrix) from the phase of complex random measurements. We show that in this \emph{phase-only compressive sensing} (PO-CS) scenario, we can perfectly recover such a signal with high probability and up to global unknown amplitude if the sensing matrix is a complex Gaussian random matrix and if the number of measurements is large compared to the complexity level of the signal space. Our approach proceeds by recasting the (non-linear) PO-CS scheme as a linear compressive sensing model built from a signal normalization constraint, and a phase-consistency constraint imposing any signal estimate to match the observed phases in the measurement domain. Practically, stable and robust signal direction estimation is achieved from any \emph{instance optimal} algorithm of the compressive sensing literature (such as basis pursuit denoising). This is ensured by proving that the matrix associated with this equivalent linear model satisfies with high probability the restricted isometry property under the above condition on the number of measurements. We finally observe experimentally that robust signal direction recovery is reached at about twice the number of measurements needed for signal recovery in compressive sensing.        
\end{abstract}  

\section{Introduction}
\label{sec:introduction}

About 40 years ago, Oppenheim and his collaborators~\cite{Oppenheim_1982,Oppenheim_1981} determined that, under certain conditions and up to a global unknown amplitude, a band-limited signal can be exactly recovered from the phase of its Fourier transform; one can reconstruct the \emph{direction} of this signal. The authors also provided a practical iterative algorithm for this purpose. Since the observed signal lies at the intersection of the set of band-limited signals and a specific phase-consistency set --- the set of signals matching, in the frequency space, the phase of the observed signal --- this algorithm is designed to find this intersection, when it is unique, by alternate projections onto these two convex sets (the POCS method). 

More recently, Boufounos~\cite{boufounos2013sparse} considered a similar question in the context of complex compressive sensing where measurements of a sparse signal are obtained from its multiplication with a complex, fat sensing matrix. Since keeping the phase of such complex measurements naturally extends the formalism of one-bit compressive sensing (which keeps only the sign of real random projections)~\cite{boufounos2008,jacques2013robust,plan2012robust}, the author demonstrated that, in the case of complex Gaussian random sensing matrices, one can estimate the direction of such a signal from the phase of the compressive measurements (or \emph{phase-only} compressive sensing -- PO-CS). The estimation error then provably decays when the number of measurements increases. Boufounos actually proved that, with high probability (\whp) and up to a controlled distortion, the discrepancy between the phase of complex Gaussian random projections of two sparse vectors encodes their Euclidean distance. Recently, pursuing the correspondence with one-bit CS and extending its central $(\ell_1,\ell_2)$-restricted isometry property ($(\ell_1,\ell_2)$-RIP)~\cite{foucart2016flavors} to complex sensing matrices, Feuillen and co-authors established that a single-step procedure called projected back-projection (PBP) --- an estimation of the signal direction by hard-thresholding the multiplication of the PO-CS measurements by the adjoint sensing matrix~\cite{FDVJ19}  --- also yields small and decaying reconstruction error. Of interest for our work, Boufounos also designed in~\cite{boufounos2008} a greedy algorithm delivering an estimate that is both sparse and phase-consistent with the unknown signal; this signal and the estimate share identical phases in the random projection domain. However, none of the theoretical approaches above can explain why this specific algorithm succeeds numerically in perfectly estimating the observed signal direction. 

With this work, we bring two main contributions to this context. First, we demonstrate that, in a noiseless scenario, perfect estimation of a signal direction from the phase of its random, complex Gaussian measurements is possible, not only for sparse signals but for any signals belonging to a symmetric, low-complexity conic set (or \emph{cone}) of reduced dimensionality, including (union of) subspaces, the set of sparse signals, the set of model-based or group sparse vectors~\cite{baraniuk2010model,Ayaz_2016} or the set of low-rank matrices~\cite{fazel2002matrix}. This result strikingly differs from known reconstruction guarantees in the context of (real) one-bit CS~\cite{boufounos2008,jacques2013robust,plan2012robust}; in this case, the direction of a low-complexity signal can only be estimated up to a lower-bounded error~\cite{jacques2013robust}.  

Second, we show that one can reconstruct the direction of a low-complexity signal from any \emph{instance optimal} algorithms of the CS literature whose error guarantees are controlled by the restricted isometry property of the sensing matrix~\cite{candes2005decoding,Foucart_2013,Traonmilin_2018}.  Using such algorithms, we can bound the reconstruction error of the direction of a low-complexity signal observed in a PO-CS model. This error bound is \emph{(i)} \emph{non-uniform}, in the sense that the estimation is possible, \whp, given the observed low-complexity signal, and \emph{(ii)} \emph{stable and robust} as the instance optimality of the selected algorithm allows for both a bounded noise on the observed measurement phases and an error on the modeling of the signal by an element of a low-complexity set. Experimentally, we observe that the number of measurements required for robust estimation of signal direction from noiseless phase-only observation is about twice the one needed for signal estimation in the case of (linear)~CS. 

Our approach is thus complementary to~\cite{FDVJ19}; in this work, provided that the sensing matrix respects the $(\ell_1,\ell_2)$-RIP, the PBP algorithm yields a bounded reconstruction error for the estimation of the direction of any unknown signal in (noisy) PO-CS. The bound thus holds \emph{uniformly}. However, the observed signal must be sparse and the reconstruction error does not vanish in absence of measurement noise.  

Regarding the mathematical tools, all our developments rely on recasting of the PO-CS recovery problem as a linear compressive sensing problem. This one is built from a signal normalization constraint --- as allowed from the independence of the PO-CS model to the signal amplitude --- and a phase-consistency constraint imposing to match the observed measurement phases. We show that in noiseless PO-CS, provided that the number of measurements is large compared to the signal set complexity, the sensing matrix associated with this equivalent formulation respects the restricted isometry property over the signal set; the distance between two low-complexity vectors is encoded in the one of their projections by this matrix (see Sec.~\ref{sec:sensing-model}). Moreover, when a bounded noise corrupts the PO-CS model, this property is preserved if the noise variations are small.  All our proofs are self-contained and mostly rely on extending to the complex field previously introduced random embeddings, such as the (local) sign-product embedding~\cite{plan2012robust,jacques2013quantized,foucart2016flavors}.

The rest of this paper is structured as follows. We first end this introduction by defining some useful notations and conventions. In Sec.~\ref{sec:sensing-model}, we recall a few fundamental results of the compressive sensing theory that are subsequently used in this work. Sec.~\ref{sec:PO-CS} defines the PO-CS problem, its reformulation as a signal recovery problem observed by a linear model and provides the first guarantees on the exact estimation of a low-complexity signal in the case of noiseless PO-CS. Sec.~\ref{sec:noisy-phase-sensing} extends these guarantees to the case where the PO-CS model is corrupted by a bounded noise and the unknown signal is well represented in a low-complexity domain with a bounded modeling error.  For clarity, the proofs of our main results are inserted in Sec.~\ref{sec:proofs}. In Sec.~\ref{sec:experiments}, we provide several experiments leveraging the instance optimality of the basis pursuit denoising algorithm. We first establish from which number of measurements one can perfectly estimate the direction of a sparse signal in PO-CS with complex Gaussian random sensing matrices. We then measure the accuracy of such an estimate under noise corruption for various noise levels. We conclude this paper in Sec.~\ref{sec:conclusion} with a few open questions and perspectives. 

\paragraph*{Notations and conventions:} Hereafter, the symbols $C,c > 0$ represent absolute constants whose exact value may change from one instance to the other.
We denote matrices and vectors with bold symbols, \eg $\bs \Phi \in \mathbb{C}^{m\times n}$, $\bs x \in \mathbb{C}^{n}$, and scalar values with light symbols. We will often use
the following quantities: $[d]:=\{1,\,\cdots, d\}$ for $d \in \bb N$; $\im = \sqrt{-1}$; $\Re\{\lambda\}$ (or $\lambda^\Re$) and $\Im\{\lambda\}$ (or $\lambda^\Im$) are the real and
imaginary part of $\lambda \in \bb C$, respectively, and
$\lambda^\ast$ is its complex conjugate; $\bs 1_d := (1,\,\cdots, 1)^\top \in \bb R^d$ and $\Id_d$ is the $d \times d$ identity matrix; $\bs A^*$ is the adjoint (conjugate transpose) of $\bs A$; $\supp \bs x = \{i: x_i \neq 0\}$; $|\cl S|$ is the cardinality of a finite set $\cl S$; $\bs A_\Omega$ is matrix formed by restricting the $d$ columns of $\bs A$ to those indexed in $\Omega \subset [d]$; $\langle\bs x, \bs y  \rangle= \sum_{i=1}^d x^*_i  y_i$ is the scalar product of $\bs x, \bs y \in \mathbb{C}^d$; $\| \bs x \|_p=(\sum_{i=1}^d |x_i|^p)^{1/p}$ is the $\ell_p$-norm of $\bs x$ ($p\geq 1$), with $\|\bs x\|_\infty = \max_i |x_i|$ and $\|\bs x \|_0:= |\supp (\bs x) |$; $\cl K - \cl K' := \{\bs u - \bs v: \bs u \in \cl K,\bs v \in \cl K'\}$ and $\bb R \bs x$ denote the Minkowski difference of $\cl K, \cl K' \subset \bb R^d$ and the span of $\bs x \in \bb R^d$, respectively% ; the addition of two orthogonal subspaces $\cl V, \cl V' \subset \bb R^d$ is $\cl V \oplus \cl V'$, and $\cl V^\perp := \bb R^d \ominus \cl V$ is the orthogonal complement of $\cl V$ in $\bb R^d$
. The symbol ``$\sim$'' is used to describe either the distribution law of a random quantity (\eg $X \sim \cl N(0,1)$), or the equivalence of distribution between two random quantities (\eg $Y \sim X$). Given a (scalar) random distribution $\cl D$, we denote by $\cl D^{m \times n}$ the $m \times n$ random matrix distribution generating matrices with entries independently and identically distributed ($\iid$) as $\cl D$, and we omit the superscript $n$ for vectors (for which $n=1$); for instance, the $m \times n$ real and complex Gaussian random matrices distributed as $\cl N^{m\times n}(\mu', \sigma^2)$ and $\cl N_{\bb C}^{m\times n}(\mu, 2\sigma^2) \sim \cl N^{m\times n}(\mu^\Re, \sigma^2) + \im \cl N^{m\times n}(\mu^\Im, \sigma^2)$, respectively, for some means $\mu,\mu'$ and variance $\sigma^2$. An $s$-sparse $\bs x$ vector belongs to the set $\Sigma^n_s :=\{\bs u \in \bb R^n, \|\bs u \|_0 \leq s \}$. 

\section{Preliminaries}
\label{sec:sensing-model}

Let us first provide the key principles of CS theory. In the complex field, the compressive sensing of a signal $\bs x \in \bb R^n$ with a complex sensing matrix $\bs A = \bs A^\Re + \im \bs A^\Im \in \bb C^{m \times n}$ corresponds to acquiring $m$ noisy complex measurements through the model~\cite{Foucart_2013}
\begin{equation}
  \label{eq:cs-model}
  \bs y = \bs A \bs x + \bs \epsilon,
\end{equation}
for some complex measurement noise $\bs \epsilon \in \bb C^m$. In this model, we assume $\bs x$ real as we can always recast the sensing of $\bs x \in \bb C^n$ as the one of $(\Re(\bs x)^\top, \Im(\bs x)^\top)^\top \in \bb R^{2n}$ with the sensing matrix $(\bs A^\Re, -\bs A^\Im) + \im (\bs A^\Im, \bs A^\Re) \in \bb C^{m \times 2n}$. 

The signal $\bs x$ can be recovered from the measurement vector $\bs y$, if $\bs A$ is ``well conditioned'' with respect the sparse signal set $\Sigma^n_s$~\cite{candes2005decoding, Foucart_2013}. This happens for instance if, for some $0<\delta<1$, $\bs A$ satisfies the restricted isometry property of order $2s$, or RIP$(\Sigma^n_{2s},\delta)$, defined by
\begin{equation}
\label{eq:RIP-def}
\ts   (1-\delta) \|\bs u\|^2 \leq \|\bs A \bs u\|^2 \leq (1+\delta) \|\bs u\|^2, \quad \forall \bs u \in \Sigma^n_{2s}.
\end{equation}
For instance, this property holds with high probability if $\bs A$ is a real (or complex~\cite{Foucart_2013}) $m \times n$ random Gaussian matrix, and if $m \geq C \delta^{-2} s \log ({n}/{\delta s})$~\cite{baraniuk2008simple}.   

If the RIP is verified over $\Sigma^n_{2s}$ with $\delta < 1/\sqrt 2$~\cite{cai2013sparse} (see also~\cite[Thm 6]{foucart2016flavors}), then, for some $C,D >0$ and $\varepsilon \geq \|\bs \epsilon\|$, the \emph{basis pursuit denoising}~\cite{chen2001atomic} estimate $\hat{\bs x} = \Delta_{\Sigma^n_s}(\bs y, \bs A; \varepsilon)$ with
\begin{equation}
  \label{eq:BP-def}
\ts  \Delta_{\Sigma^n_s}(\bs y, \bs A; \varepsilon) := \argmin_{\bs u} \|\bs u\|_1 \st \|\bs A \bs u - \bs y\| \leq \varepsilon, \tag{BPDN}
\end{equation}
satisfies 
\begin{equation}
  \label{eq:l2-l1-inst-opt}
\ts \|\bs x - \hat{\bs x}\| \leq C s^{-1} \|\bs x - \bs x_s\|_1 + D \varepsilon,
\end{equation}
with $\bs x_s$ the closest $s$-sparse signal to $\bs x$ (that is, its best $s$-sparse approximation). Consequently, if $\bs x \in \Sigma^n_s$, $\bs \epsilon=\bs 0$, and $\varepsilon=0$, we get perfect signal recovery ($\hat{\bs x} = \bs x$).
Similar robustness and stability results can be achieved for a series of other algorithms --- such as orthogonal matching pursuit (OMP), compressive sampling matching pursuit (CoSaMP), or iterative hard thresholding (IHT) --- as soon as $\bs A$ respects the RIP for sparsity levels equal to a few multiples of $s$ and for small enough $\delta$~\cite{Tropp_2007,blumensath2009iterative,foucart2016flavors,needell2009cosamp}
\medskip

In this paper, we are interested in more general low-complexity signal spaces than the set of sparse vectors. We open our study, for instance, to (union of) subspaces of $\bb R^n$, such as band-limited or sparse signals, or signals displaying more complex sparsity patterns such as model-based or group sparsity~\cite{Ayaz_2016,baraniuk2010model}, and to the set $\Lambda^n_r$ of $\sqrt n \times \sqrt n$ rank-$r$ matrices (for $n$ a squared integer). We thus focus on \emph{symmetric cones} $\cl K$ (for which $-\cl K = \cl K$ and $\lambda \cl K \subset \cl K$ for all $\lambda >0$) that occupy a ``small volume'' of $\bb R^n$. This can be measured by the (localized) Gaussian mean width $w(\cl K \cap \bb S^{n-1})$ of $\cl K$, with 
\begin{equation}
  \label{eq:GMW-def}
  w(\cl K') := \bb E \sup_{\bs u \in \cl K'} \scp{\bs g}{\bs u},\ \text{for $\cl K' \subset \bb R^n$ and $\bs g \sim \cl N^n(0,1)$.}  
\end{equation}
This quantity encodes the intrinsic dimension of $\cl K$; for instance, $w^2(\cl L \cap \bb S^{n-1}) \leq C L$ for a subspace $\cl L \subset \bb R^n$ of dimension $L$ (such as the space of band-limited signals of $\bb R^n$ whose discrete cosine transform (DCT) is zero over the last $n-L$ frequencies),  $w^2(\Sigma^n_s \cap \bb S^{n-1}) \leq C s \log(n/s)$ --- that is the quantity driving the number of measurements for sparse signal recovery in CS ---, and $w(\Lambda^n_r \cap \bb S^{n-1}) \leq C r \sqrt n$~\cite{chandrasekaran2012convex,plan2012robust,candes2011tight} (see also~\cite[Table 1]{Jacques_2017} for other examples of low-complexity sets). 

It has now been established that, for a variety of low-complexity, symmetric cones $\cl K$, a signal $\bs x \in \cl K$ can be recovered from the noisy measurement vector $\bs y$ in \eqref{eq:cs-model}. For each of these sets, there exists an algorithm $\Delta_{\cl K}:\bb C^m \times \bb C^{m\times n} \times \bb R_+ \to \bb R^n$ that provides an \emph{instance optimal} estimate $\hat{ \bs x}=\Delta_{\cl K}(\bs y, \bs A; \varepsilon)$ of $\bs x$ verifying, for some $C,D >0$, 
\begin{equation}
  \label{eq:l2-l1-inst-opt-gen}
\ts \|\bs x - \hat{\bs x}\| \leq C e_0(\bs x, \cl K) + D \varepsilon,
\end{equation}
where $e_0$, which vanishes if $\bs x \in \cl K$, stands for a \emph{modeling error} in the approximation of $\bs x$ by an element of $\cl K$. For instance, we have $e_0(\bs x, \Sigma^n_s) = \|\bs x - \bs x_s\|_1/\sqrt s$ or $e_0(\bs x, \Sigma^n_s) = \|\bs x - \bs x_s\|_2$~\cite{Foucart_2013,Traonmilin_2018}, and $e_0(\bs X, \Lambda^n_r) = \|\bs X - \bs X_r\|_*/\sqrt r$ with ${\|\cdot \|_*}$ the nuclear norm and $\bs X_r$ is the best rank-$r$ approximation of $\bs X \in \bb R^{\sqrt n \times \sqrt n}$~\cite{fazel2008compressed}.

This instance optimality holds if, for some $0<\delta_0<1$ associated with $\Delta_{\cl K}$ and $\cl K$ (for instance, $\delta_0 = 1/\sqrt 2$ for \ref{eq:BP-def} and $\cl K = \Sigma^n_s$) and $0<\delta<\delta_0$, $\bs A$ respects the general RIP$(\cl K - \cl K, \delta)$~\cite{Traonmilin_2018} defined by  
\begin{equation}
\label{eq:gen-RIP-def}
\ts   (1-\delta) \|\bs u\|^2 \leq \|\bs A \bs u\|^2 \leq (1+\delta) \|\bs u\|^2, \quad \forall \bs u \in \cl K - \cl K,
\end{equation}
where $\cl K - \cl K$ is the Minkowski difference of $\cl K$ with itself (\eg $\cl K - \cl K = \Sigma^n_{2s}$ for $\cl K = \Sigma^n_s$). 

We introduce the concept of \emph{recoverable} set to capture these last considerations.     
\begin{definition}
\label{def:recoverable-set}
A symmetric cone $\cl K \subset \bb R^n$ is called $(\Delta_{\cl K},\delta_0)$-\emph{recoverable} if there exists an algorithm $\Delta_{\cl K}:\bb C^m \times \bb C^{m\times n} \times \bb R_+ \to \bb R^n$ such that, for some $0<\delta_0 < 1$, any matrix $\bs A \in \bb C^{m \times n}$ satisfying the RIP$(\cl K- \cl K, \delta)$ with $0<\delta<\delta_0$, and any noise $\bs \epsilon \in \bb C^m$ with $\|\bs \epsilon\| \leq \varepsilon$, the vector $\hat{\bs x} = \Delta_{\cl K}(\bs A \bs x + \bs \epsilon, \bs A; \varepsilon) \in \bb R^n$ estimates any signal $\bs x \in \bb R^n$ with bounded modeling error $e_0(\bs x, \cl K)$ according to the instance optimality condition \eqref{eq:l2-l1-inst-opt-gen}. 
\end{definition}

To give a few examples of sets $\cl K$ and algorithms, we can mention that a mixed norm penalization of \ref{eq:BP-def} is adapted for signals displaying structured sparsity~\cite{Ayaz_2016}, the recovery of low-rank matrices is ensured either by replacing the $\ell_1$-norm of the \ref{eq:BP-def} program by the nuclear norm~\cite{fazel2002matrix,fazel2008compressed}, or by resorting to specific iterative algorithms such as Singular Value Projection (SVP)~\cite{jain2010guaranteed} or conjugate gradient iterative hard thresholding (CGIHT)~\cite{Blanchard_2015}. 

Concerning the existence of matrices satisfying \eqref{eq:gen-RIP-def}, Mendelson and Pajor~\cite{Mendelson_2008} proved the following crucial result that we will use in our proofs.
\begin{theorem}[{Adapted from\footnote{Thm 2.1 in~\cite{Mendelson_2008} is actually valid for any sub-Gaussian random matrices. It is also restricted to subsets of $\bb S^{n-1}$, which can be easily extended to cones as in Thm~\ref{thm:gen-rip}.}~\cite[Thm 2.1]{Mendelson_2008}}]
\label{thm:gen-rip}
Given a cone $\cl K' \subset \bb R^n$ and $\delta > 0$, if
\begin{equation}
  \label{eq:medelson-sample-complex}
  m \geq C \delta^{-2} w^2(\cl K' \cap \bb S^{n-1}),  
\end{equation}
then  the matrix $\frac{1}{\sqrt m} \bs \Phi$ with $\bs \Phi \sim \cl N^{m \times n}(0,1)$ respects the RIP$(\cl K', \delta)$ with probability exceeding $1 - C\exp(-c \delta^2 m)$.
\end{theorem}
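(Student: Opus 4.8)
The plan is to prove the two-sided bound on $\|\tfrac{1}{\sqrt m}\bs\Phi\bs u\|^2=\tfrac1m\|\bs\Phi\bs u\|^2$ uniformly over the spherical section $T:=\cl K'\cap\bb S^{n-1}$, and then to invoke the homogeneity of the cone to extend it to all of $\cl K'$. Indeed, since $\cl K'$ is a cone, both $\|\bs u\|^2$ and $\|\tfrac{1}{\sqrt m}\bs\Phi\bs u\|^2$ are $2$-homogeneous in $\bs u$, so the RIP$(\cl K',\delta)$ inequalities \eqref{eq:gen-RIP-def} are invariant under positive scaling and are equivalent to the single estimate
\[
\sup_{\bs u\in T}\Big|\tfrac1m\|\bs\Phi\bs u\|^2-1\Big|\ \le\ \delta.
\]
This is precisely the reduction alluded to in the footnote: it suffices to treat a subset of the sphere and then rescale, and here $w(T)=w(\cl K'\cap\bb S^{n-1})$ is exactly the width appearing in \eqref{eq:medelson-sample-complex}.

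First I would control the \emph{linear} object $\sup_{\bs u\in T}\big|\tfrac{1}{\sqrt m}\|\bs\Phi\bs u\|-1\big|$ rather than the quadratic one, as it is directly governed by the Gaussian width. The expectation is handled by a Gaussian comparison (Gordon) argument: writing $E_m:=\bb E\|\bs g\|$ for $\bs g\sim\cl N^m(0,1)$ (so $\sqrt m-1\le E_m\le\sqrt m$), Gordon's min--max inequality gives
\[
E_m-w(T)\ \le\ \bb E\inf_{\bs u\in T}\|\bs\Phi\bs u\|\ \le\ \bb E\sup_{\bs u\in T}\|\bs\Phi\bs u\|\ \le\ E_m+w(T),
\]
whence $\bb E\sup_{\bs u\in T}\big|\|\bs\Phi\bs u\|-\sqrt m\big|\le w(T)+O(1)$. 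For the fluctuations, note that for fixed $\bs u\in\bb S^{n-1}$ the map $\bs\Phi\mapsto\|\bs\Phi\bs u\|$ is $1$-Lipschitz in the Frobenius norm, hence so are its supremum and infimum over $T$; the Gaussian Lipschitz concentration inequality then shows each deviates from its mean by more than $t$ with probability at most $2\exp(-t^2/2)$. Combining comparison and concentration, with probability at least $1-4\exp(-t^2/2)$,
\[
\sup_{\bs u\in T}\Big|\tfrac{1}{\sqrt m}\|\bs\Phi\bs u\|-1\Big|\ \le\ \frac{w(T)+t+O(1)}{\sqrt m}\ =:\ \beta .
\]

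Finally I would transfer this to the quadratic form and calibrate the parameters. Writing $a:=\tfrac{1}{\sqrt m}\|\bs\Phi\bs u\|$, the elementary identity $|a^2-1|=|a-1|(a+1)\le\beta(2+\beta)$ shows that, as long as $\beta\le1$, we have $\sup_{\bs u\in T}|\tfrac1m\|\bs\Phi\bs u\|^2-1|\le 3\beta$. It then remains to set $t:=c_1\delta\sqrt m$ and to impose $m\ge C\delta^{-2}w^2(T)$ with $C$ large enough that $3\beta\le\delta$; the failure probability $4\exp(-t^2/2)$ becomes $C\exp(-c\delta^2 m)$, matching the claim. I expect the quantitative core, and the main obstacle, to be the Gaussian-width expectation bound (Gordon's comparison, equivalently the generic-chaining estimate underlying Mendelson--Pajor); everything else is a scaling reduction, a one-line Lipschitz concentration step, or bookkeeping of constants in the square-root-to-square conversion. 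An alternative route replacing the first two displays by a single application of the matrix deviation inequality would yield the same bound with identical sample complexity.
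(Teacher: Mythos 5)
Your proof is correct, but note that it cannot match the paper's treatment for a simple reason: the paper never proves this theorem. It imports it wholesale from \cite[Thm 2.1]{Mendelson_2008}, whose argument is an empirical-process/chaining one valid for arbitrary sub-Gaussian ensembles, and the only ``adaptation'' the paper performs is the cone-to-sphere reduction by $2$-homogeneity mentioned in the footnote --- exactly the reduction you open with. What you supply instead is a genuine, self-contained proof of the Gaussian case: Gordon's min--max comparison to bound $\bb E\sup_{\bs u\in T}\|\bs \Phi\bs u\|$ and $\bb E\inf_{\bs u\in T}\|\bs \Phi\bs u\|$ by $E_m\pm w(T)$, Gaussian Lipschitz concentration (the paper's Lemma~\ref{lem:LD-concent} applied to the $mn$-dimensional Gaussian vector, using that the sup and inf are $1$-Lipschitz in Frobenius norm) for the fluctuations, and the elementary conversion $|a^2-1|=|a-1|(a+1)\le\beta(2+\beta)$. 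The trade-off is clear: your route is more elementary and exactly fits the Gaussian statement the paper actually uses, whereas the cited result covers all sub-Gaussian matrices (the generality the footnote advertises), which your argument cannot reach since Gordon's comparison has no sub-Gaussian analogue. One bookkeeping point you should make explicit rather than wave at: forcing $3\beta\le\delta$ requires absorbing the $O(1)/\sqrt m$ term, i.e.\ $m\gtrsim\delta^{-2}$, which does \emph{not} follow from $m\ge C\delta^{-2}w^2(T)$ when $w(T)$ is small (e.g.\ $\cl K'$ a single ray, where $w(T)=0$); the theorem survives in that regime only because the failure bound $C\exp(-c\delta^2 m)$ can be rendered vacuous (at least $1$) there by the choice of the absolute constants $C,c$, a one-sentence remark that completes the argument.
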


This result, which extends easily to complex Gaussian random matrices\footnote{Simply observe that for $\bs \Phi \sim \cl N_{\bb C}(0,2)$ and real $\bs x$, $\|\bs \Phi \bs x\|^2 = \| \bar{\bs \Phi} \bs x\|^2$ with $\bar{\bs \Phi} := [ (\bs \Phi^\Re)^\top, (\bs \Phi^\Im)^\top ]^\top \sim \cl N^{2m \times n}(0,1)$.}, shows that, \whp, we can stably recover all signals of a low-complexity cone $\cl K$ from $m$ Gaussian random measurements provided $m \geq C \delta^{-2} w^2((\cl K - \cl K) \cap \bb S^{n-1})$. For similar guarantees in the context of other (structured) sensing matrices (\eg partial Fourier measurements) and specific low-complexity spaces, the interested reader may consult, for instance,~\cite{Foucart_2013} for a comprehensive overview of the literature. 

\section{Phase-only Compressive Sensing}
\label{sec:PO-CS}

We now depart from the linear CS model and consider the possibility of (partly) recovering a signal $\bs x \in \bb R^n$ from the \emph{phase} of random projections achieved with a complex sensing matrix $\bs A \in \bb C^{m \times n}$. For clarity, we first consider the case where $\bs x$ belongs to a low-complexity set $\cl K \subset \bb R^n$ and the phase measurements are exact. We thus study the non-linear, noiseless phase-only compressive sensing (PO-CS) model 
\begin{equation}
  \label{eq:PO-CS}
  \bs z = \signc (\bs A \bs x),
\end{equation}
with $\signc \lambda := \lambda /|\lambda|$ if $\lambda \in \bb C \setminus \{0\}$, and $\signc 0 := 0$. We develop in the next section a noisy variant of \eqref{eq:PO-CS}, and cover the case where $\bs x$ is only known to be close to $\cl K$ (which induces a small modeling error $e_0$ in \eqref{eq:l2-l1-inst-opt-gen}).

This model encompasses the one considered by Oppenheim in~\cite{Oppenheim_1982} when $\cl K$ is the subspace of band-limited signals and $\bs A$ is the Fourier sensing matrix. Moreover, since $\signc \lambda = \sign \lambda$ for $\lambda \in \bb R$, PO-CS naturally generalizes one-bit CS to the complex field~\cite{boufounos2008,boufounos2013sparse}.

We remove the ambiguity raised by the unobserved signal amplitude in \eqref{eq:PO-CS} by arbitrary specifying 
\begin{equation}
  \label{eq:norm-hyp}
  \|\bs A  \bs x\|_1=\kappa \sqrt m,
\end{equation}
for some $\kappa > 0$. This $\ell_1$-norm normalization of $\bs \Phi \bs x$ is reminiscent of the one imposed in one-bit CS where the signal amplitude is unknown~\cite{plan2012robust}. It can also be understood by the fact that, for $\|\bs x\|=1$, and $\bs A = \bs \Phi/\sqrt m$ with $\bs \Phi \sim \cl N^{m \times n}_{\bb C}(0,2)$, $\bb E \|\bs A \bs x\|_1=\kappa \sqrt m$  with $\kappa := \sqrt{\pi/2}$ (see the proof of Lemma~\ref{lem:concentration-L1Phi-norm}). Imposing \eqref{eq:norm-hyp} is thus close to enforcing $\|\bs x\| = 1$ for complex Gaussian random matrices.

Following~\cite{boufounos2013sparse}, this normalization allows us to recast the PO-CS model as a linear CS model. Indeed, introducing $\bs \alpha_{\bs z} := \bs A^* \bs z/(\kappa \sqrt m) \in \bb C^m$, we see that the \emph{phase-consistency} and normalization constraints defined by, respectively,
\begin{equation}
  \label{eq:consistency-root}
  \diag(\bs z)^* \bs A \bs u \in \bb R^m_+,\quad \scp{\bs \alpha_{\bs z}}{\bs u} = 1,
\end{equation}
are respected for $\bs u = \bs x$. Since $\bs x$ is real, this is equivalent to
\begin{equation}
  \label{eq:consistency-real-imag}
  \begin{cases}
        \bs H_{\bs z} \bs u = \bs 0,\quad \scp{\bs \alpha^\Re_{\bs z}}{\bs u} = 1,\quad \scp{\bs \alpha^\Im_{\bs z}}{\bs u} = 0\\
    (\bs D_{\bs z}^\Re \bs A^\Re + \bs D_{\bs z}^\Im \bs A^\Im ) \bs u > \bs 0,
  \end{cases}
\end{equation}
with $\bs u \in \bb R^n$, and $\bs D_{\bs v} := \diag(\bs v)$, $\bs H_{\bs v} := \Im(\bs D_{\bs v}^* \bs A) = \bs D_{\bs v}^\Re \bs A^\Im - \bs D_{\bs v}^\Im \bs A^\Re $ for $\bs v \in \bb C^m$.

A meaningful estimate $\hat{\bs x} \in \bb R^n$ of $\bs x$ should thus respect the constraints \eqref{eq:consistency-real-imag}.
Interestingly, if we discard the second line of \eqref{eq:consistency-real-imag}, the remaining constraints amount to imposing
\begin{equation}
  \label{eq:equiv-cs-model}
\bs A_{\bs z} \bs u = \bs e_1 = (1, 0,\, \cdots, 0)^\top = \bs A_{\bs z} \bs x,  
\end{equation}
with 
\begin{equation}
  \label{eq:Ax-def} 
  \bs A_{\bs v} :=
  \begin{pmatrix}
    (\bs \alpha^\Re_{\bs v})^\top\\[1mm]
    (\bs \alpha^\Im_{\bs v})^\top\\[1mm]
    \bs H_{\bs v}
  \end{pmatrix}
  =
  \begin{pmatrix}
    \inv{\kappa \sqrt m} \big((\bs v^\Re)^\top \bs A^\Re + (\bs v^\Im)^\top \bs A^\Im \big)\\[1mm]
    \inv{\kappa \sqrt m} \big((\bs v^\Re)^\top \bs A^\Im - (\bs v^\Im)^\top \bs A^\Re \big)\\[1mm]
    \bs D_{\bs v}^\Re \bs A^\Im - \bs D_{\bs v}^\Im \bs A^\Re
  \end{pmatrix} \in \bb R^{(m+2)\times n},\quad \bs v \in \bb C^m,
\end{equation}
and where the equality $\bs A_{\bs z} \bs x = \bs e_1$ reformulates the hypothesis $\|\bs A \bs x\|_1 = \kappa \sqrt m$ since $\bs H_{\bs z} \bs x = \bs 0$.

In other words, a subset of the constraints in \eqref{eq:consistency-real-imag} leads to the equivalent CS recovery problem \eqref{eq:equiv-cs-model} where we aim at estimating $\bs x$ from $\bs A_{\bs z} \bs x = \bs e_1$. Therefore,  if one can show that, for some $0<\delta<1$, $\bs A_{\bs z}$ respects \whp the RIP$(\cl K - \cl K, \delta)$ defined in \eqref{eq:gen-RIP-def}, then, any estimate $\hat{\bs x} \in \cl K$ satisfying 
\eqref{eq:consistency-real-imag} is sure to recover the direction of $\bs x$ since
$$
\ts \|\hat{\bs x} - \bs x\|^2 \leq \frac{1}{1-\delta} \|\bs A_{\bs z} \hat{\bs x} - \bs A_{\bs z} \bs x\| = 0.
$$

Moreover, as imposing both \eqref{eq:equiv-cs-model}  and $\hat{\bs x} \in \cl K$ often leads to impractical algorithms (in fact, an NP-hard, constrained $\ell_0$ minimization for the recovery of sparse signals), a more practical result derives easily from the model~\eqref{eq:equiv-cs-model} and from the definition of recoverable set introduced in Sec.~\ref{sec:sensing-model}.

\begin{theorem}[Perfect signal direction estimation in noiseless PO-CS]  
\label{thm:sig-dir-rec-noiseless-po-cs}
Let $\cl K \subset \bb R^n$ be a symmetric cone that is $(\Delta_{\cl K},\delta_0)$-recoverable with the instance optimal algorithm $\Delta_{\cl K}$. Given $\bs A \in \bb C^{m \times n}$, we can recover the direction of $\bs x \in \cl K$ from its $m$ phase-only measurements $\bs z = \signc(\bs A \bs x)$ if the matrix $\bs A_{\bs z} \in \bb R^{(m+2)\times n}$ built from $\bs A$ and $\bs z$ in \eqref{eq:Ax-def} respects the RIP$(\cl K - \cl K, \delta)$ with $0<\delta < \delta_0$. More precisely, under the hypothesis $\bs A_{\bs z} \bs x = \bs e_1$ (which corresponds to $\|\bs A \bs x\|_1 = \kappa \sqrt m$), we have $\hat{\bs x} = \bs x$ if $\hat{\bs x} = \Delta_{\cl K}(\bs e_1, \bs A_{\bs z}; 0)$.  
\end{theorem}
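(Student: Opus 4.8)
The plan is to reduce the claim to a direct application of the $(\Delta_{\cl K},\delta_0)$-recoverability of $\cl K$, since all the analytic substance has already been packaged into the hypothesis that $\bs A_{\bs z}$ satisfies RIP$(\cl K - \cl K, \delta)$. First I would record the structural fact derived in \eqref{eq:consistency-root}--\eqref{eq:equiv-cs-model}: because $\bs z = \signc(\bs A\bs x)$, the phase-consistency constraint $\diag(\bs z)^*\bs A\bs x \in \bb R^m_+$ holds, and since $\bs x$ is real this forces $\bs H_{\bs z}\bs x = \bs 0$; meanwhile the entries being real and summing to $\|\bs A\bs x\|_1$ means $\scp{\bs\alpha_{\bs z}}{\bs x} = \|\bs A\bs x\|_1/(\kappa\sqrt m)$, which the normalization \eqref{eq:norm-hyp} pins to the real value $1$, i.e. $\scp{\bs\alpha^\Re_{\bs z}}{\bs x} = 1$ and $\scp{\bs\alpha^\Im_{\bs z}}{\bs x} = 0$. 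Reading off the block structure of $\bs A_{\bs z}$ in \eqref{eq:Ax-def}, these three identities are precisely $\bs A_{\bs z}\bs x = \bs e_1$, so $\bs x$ is an exact solution of the noiseless linear model $\bs A_{\bs z}\bs u = \bs e_1$.

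Next I would invoke Definition~\ref{def:recoverable-set} with the real matrix $\bs A_{\bs z} \in \bb R^{(m+2)\times n}$ (viewed as a complex matrix with $m+2$ rows) playing the role of the sensing matrix, the clean vector $\bs e_1$ playing the role of the measurement vector, and zero noise, i.e. $\bs\epsilon = \bs 0$ and $\varepsilon = 0$. The hypothesis that $\bs A_{\bs z}$ respects RIP$(\cl K - \cl K, \delta)$ with $0 < \delta < \delta_0$ is exactly the condition under which the instance optimal algorithm $\Delta_{\cl K}$ obeys \eqref{eq:l2-l1-inst-opt-gen}. Hence, for $\hat{\bs x} = \Delta_{\cl K}(\bs e_1, \bs A_{\bs z}; 0)$ we obtain $\|\bs x - \hat{\bs x}\| \leq C\,e_0(\bs x, \cl K) + D\cdot 0$.

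To close, I would use that $\bs x \in \cl K$, so the modeling error vanishes, $e_0(\bs x, \cl K) = 0$, leaving $\|\bs x - \hat{\bs x}\| \leq 0$ and therefore $\hat{\bs x} = \bs x$. The reason this recovers $\bs x$ exactly rather than merely up to scale is that the amplitude was fixed a priori by \eqref{eq:norm-hyp}; without that normalization, every positive rescaling of $\bs x$ would be PO-CS-consistent, which is the precise sense in which only the direction of $\bs x$ is determined by $\bs z$.

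I do not expect a genuine obstacle inside this theorem: the hard work — showing that a complex Gaussian $\bs A$ makes $\bs A_{\bs z}$ satisfy the RIP \whp with $m$ governed by $w^2((\cl K-\cl K)\cap\bb S^{n-1})$ — is deferred to the separate RIP analysis and is taken as a hypothesis here. The only points requiring care are bookkeeping ones: verifying that discarding the inequality (the second line of \eqref{eq:consistency-real-imag}) is harmless, since the RIP on $\bs A_{\bs z}$ alone already forces $\hat{\bs x} = \bs x$ and $\bs x$ happens to satisfy that inequality; and noting that the mismatch between the nominal $m$ measurements of $\Delta_{\cl K}$ and the $m+2$ rows of $\bs A_{\bs z}$ is immaterial, since instance optimality depends only on the RIP constant and not on the exact row count.
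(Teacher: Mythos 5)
Your proposal is correct and follows essentially the same route as the paper: the paper establishes $\bs A_{\bs z}\bs x = \bs e_1$ (phase-consistency plus normalization) in the discussion preceding the theorem, and its proof is then exactly your final step --- a direct translation of Def.~\ref{def:recoverable-set} with $e_0(\bs x, \cl K) = 0$ since $\bs x \in \cl K$ and $\varepsilon = 0$ from the noiseless equivalent model \eqref{eq:equiv-cs-model}. Your extra bookkeeping remarks (that dropping the positivity constraint in \eqref{eq:consistency-real-imag} is harmless, and that the $m+2$ row count is immaterial to instance optimality) are consistent with the paper's surrounding discussion and do not change the argument.
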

\begin{proof}
The proof is a direct translation of Def.~\ref{def:recoverable-set}; in \eqref{eq:l2-l1-inst-opt}, $e_0(\bs x, \cl K) = 0$ since $\bs x \in \cl K$, and $\varepsilon = 0$ according to the equivalent noiseless sensing model \eqref{eq:equiv-cs-model}.    
\end{proof}

\begin{remark}
We can question the specific choice of the normalization \eqref{eq:norm-hyp} as a way to complement the constraint $\bs H_{\bs z} \bs u = \bs 0$ (and thus $\bs H_{\bs z}$) to reach a RIP matrix $\bs A_{\bs z}$. For instance, we could replace $\bs \alpha_{\bs z}$ in \eqref{eq:Ax-def} by an independent real Gaussian random vector $\bs g$ and remove the signal normalization ambiguity by arbitrarily assuming $\scp{\bs g}{\bs x} = 1$ (which should happen with probability 1). In this case, $\bs A'_{\bs z} := (\bs g^\top, \bs H_{\bs z}^\top)^\top$ would offer another equivalent linear sensing model $\bs A'_{\bs z} \bs u = \bs e_1$, hoping then to prove that $\bs A'_{\bs z}$ respects the RIP. However, the use of \eqref{eq:norm-hyp} is critical. As developed in the proof of the next theorem (see Sec.~\ref{sec:proof-rip-Ax}), for any low-complexity vector $\bs u$ and in the case where $\bs A = \bs \Phi / \sqrt m$ with $\bs \Phi \sim \cl N_{\bb C}^{m \times n}(0,1)$, $\scp{\bs \alpha^\Re_{\bs z}}{\bs u}$ in the first component of $\bs A_{\bs z} \bs u$ estimates the projection $\scp{\frac{\bs x}{\|\bs x\|}}{\bs u}$. Moreover, since $\bs H_{\bs z} \bs x = \bs 0$, $\|\bs H_{\bs z} \bs u\|^2$ approximates $\|\bs u - \scp{\frac{\bs x}{\|\bs x\|}}{\bs u} \frac{\bs x}{\|\bs x\|}\|^2$. By Pythagoras, we can then prove that, \whp, $\|\bs A_{\bs z} \bs u\|^2 = (\scp{\bs \alpha^\Re_{\bs z}}{\bs u})^2 + \|\bs H_{\bs z} \bs u\|^2$ encodes $\|\bs u\|^2$ up to a controlled multiplicative distortion. 
\end{remark}

The key result of this work (proved in Sec.~\ref{sec:proofs}) consists in showing that the matrix $\bs A_{\bs z}$ in \eqref{eq:Ax-def} built from the association of a complex Gaussian random matrix $\bs A$ and a signal $\bs x \in \bb R^n$, respects, \whp, the RIP.   
\begin{theorem}
  \label{thm:rip-for-Ax}
  Given a symmetric cone $\cl K \subset \bb R^n$, $\delta > 0$, $\bs A = \bs \Phi/\sqrt m$ with $\bs \Phi \sim \cl N_{\bb C}^{m \times n}(0,2)$, $\bs x \in \bb R^n$, and $\bs A_{\bs z}$ defined in \eqref{eq:Ax-def} from $\bs A$ and $\bs z = \signc(\bs A \bs x)$ with $\kappa = \sqrt{\pi/2}$, if
  \begin{equation}
    \label{eq:sample-complexity}
    \ts m \geq C (1+\delta^{-2}) w^2\big((\cl K - \bb R \bs x) \cap \bb S^{n-1}\big),
  \end{equation}
  then, with probability exceeding $1 - C \exp(-c \delta^2 m)$, $\bs A_{\bs z}$ satisfies the RIP$(\cl K,\delta)$. 
\end{theorem}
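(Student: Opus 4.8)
The plan is to establish the Pythagorean-type identity sketched in the Remark: decompose $\|\bs A_{\bs z}\bs u\|^2$ into the three blocks of \eqref{eq:Ax-def} and show that, uniformly over the cone, the two $\bs\alpha$-rows capture the component of $\bs u$ along $\bs x$ while $\bs H_{\bs z}$ captures the orthogonal component, so that their sum encodes $\|\bs u\|^2$. Since both sides of \eqref{eq:gen-RIP-def} are $2$-homogeneous in $\bs u$, it suffices to treat $\bs u\in\cl K\cap\bb S^{n-1}$; and because $\bs z=\signc(\bs A\bs x)$ is unchanged under $\bs x\mapsto\lambda\bs x$, $\lambda>0$, I may assume $\|\bs x\|=1$. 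Writing $\bs u=\scp{\bs x}{\bs u}\bs x+\bs u_\perp$ with $\bs u_\perp\perp\bs x$, the goal becomes
\[
\|\bs A_{\bs z}\bs u\|^2=(\scp{\bs\alpha^\Re_{\bs z}}{\bs u})^2+(\scp{\bs\alpha^\Im_{\bs z}}{\bs u})^2+\|\bs H_{\bs z}\bs u\|^2\;\approx\;\scp{\bs x}{\bs u}^2+\|\bs u_\perp\|^2=1 .
\]

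The crux is a reduction to a \emph{fresh} Gaussian matrix. Writing the rows of $\bs\Phi$ as $\bs g_i+\im\bs h_i$ with $\bs g_i,\bs h_i\sim\cl N^n(0,1)$ and splitting $\bs g_i=G_i\bs x+\bs g_i^\perp$, $\bs h_i=H_i\bs x+\bs h_i^\perp$ (with $G_i=\scp{\bs x}{\bs g_i}$, $H_i=\scp{\bs x}{\bs h_i}$, $R_i:=\sqrt{G_i^2+H_i^2}$, $z_i=(G_i+\im H_i)/R_i=:e^{\im\theta_i}$), a direct computation using $\bs u_\perp\perp\bs x$ yields the block expressions
\[
\scp{\bs\alpha^\Re_{\bs z}}{\bs u}=\tfrac{\scp{\bs x}{\bs u}}{\kappa m}\ts\sum_i R_i+\tfrac{1}{\kappa m}\ts\sum_i \hat{\bs w}_i^\top\bs u_\perp,\quad \scp{\bs\alpha^\Im_{\bs z}}{\bs u}=\tfrac{1}{\kappa m}\ts\sum_i \tilde{\bs w}_i^\top\bs u_\perp,\quad (\bs H_{\bs z}\bs u)_i=\tfrac{1}{\sqrt m}\tilde{\bs w}_i^\top\bs u_\perp,
\]
where $\hat{\bs w}_i:=\cos\theta_i\,\bs g_i^\perp+\sin\theta_i\,\bs h_i^\perp$ and $\tilde{\bs w}_i:=\cos\theta_i\,\bs h_i^\perp-\sin\theta_i\,\bs g_i^\perp$. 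The key observation, which I expect to carry the whole proof, is that $(\hat{\bs w}_i,\tilde{\bs w}_i)$ is the image of $(\bs g_i^\perp,\bs h_i^\perp)$ under the planar rotation of angle $\theta_i$. Since $\theta_i$ is a function of the parallel parts $(G_i,H_i)$, which are independent of the perpendicular parts $(\bs g_i^\perp,\bs h_i^\perp)$, and since a fixed orthogonal map preserves the standard Gaussian law, conditioning on $\theta_i$ leaves $(\hat{\bs w}_i,\tilde{\bs w}_i)$ distributed as two independent standard Gaussian vectors on $\bs x^\perp$. Hence $\{\hat{\bs w}_i\}\cup\{\tilde{\bs w}_i\}$ are \iid $\cl N(0,1)$ on $\bs x^\perp\cong\bb R^{n-1}$ and \emph{jointly independent of the phases $\bs z$ and of the magnitudes $\{R_i\}$}.

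With $\tilde{\bs W}$ the $m\times(n-1)$ matrix of rows $\tilde{\bs w}_i^\top$, the previous step gives $\|\bs H_{\bs z}\bs u\|^2=\|\tfrac{1}{\sqrt m}\tilde{\bs W}\bs u_\perp\|^2$ and shows that, conditionally on $\bs z$, $\tfrac{1}{\sqrt m}\tilde{\bs W}$ is a fresh real Gaussian matrix on $\bs x^\perp$. As $\bs u_\perp$ ranges over $\{\bs u-\scp{\bs x}{\bs u}\bs x:\bs u\in\cl K\}\subset\cl K-\bb R\bs x$, I apply Theorem~\ref{thm:gen-rip} to the symmetric cone $\cl K-\bb R\bs x$ (the Gaussian width being monotone under inclusion and unchanged by the isometry $\bs x^\perp\cong\bb R^{n-1}$): whenever $m\geq C\delta^{-2}w^2((\cl K-\bb R\bs x)\cap\bb S^{n-1})$, with probability $1-C\exp(-c\delta^2 m)$ conditionally on $\bs z$, hence unconditionally after integration, $\|\bs H_{\bs z}\bs u\|^2\in[(1-\delta)\|\bs u_\perp\|^2,(1+\delta)\|\bs u_\perp\|^2]$ simultaneously for all $\bs u\in\cl K$.

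It remains to control the two $\bs\alpha$-rows uniformly. The leading part of the first is $\scp{\bs x}{\bs u}\cdot\tfrac{1}{\kappa m}\sum_i R_i$, and $\tfrac{1}{m}\sum_i R_i=\|\bs A\bs x\|_1/\sqrt m$ concentrates around $\Ex R_i=\kappa=\sqrt{\pi/2}$ by Lemma~\ref{lem:concentration-L1Phi-norm}; the residual $\tfrac{1}{\kappa m}\sum_i\hat{\bs w}_i^\top\bs u_\perp=\tfrac{1}{\kappa\sqrt m}\bar{\bs w}^\top\bs u_\perp$ (with $\bar{\bs w}:=\tfrac{1}{\sqrt m}\sum_i\hat{\bs w}_i\sim\cl N(0,1)$ on $\bs x^\perp$) is bounded uniformly by $\sup_{\bs v\in(\cl K-\bb R\bs x)\cap\bb S^{n-1}}|\bar{\bs w}^\top\bs v|/(\kappa\sqrt m)\lesssim w((\cl K-\bb R\bs x)\cap\bb S^{n-1})/\sqrt m$, which is $O(\delta)$ under \eqref{eq:sample-complexity}. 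The same concentration of the supremum of a Gaussian process controls $\scp{\bs\alpha^\Im_{\bs z}}{\bs u}$, which is of identical form. Squaring, inserting into the decomposition of $\|\bs A_{\bs z}\bs u\|^2$, and using the Pythagorean identity $\scp{\bs x}{\bs u}^2+\|\bs u_\perp\|^2=1$ yields $\big|\|\bs A_{\bs z}\bs u\|^2-1\big|\leq\delta$ uniformly, after absorbing the finitely many error terms and rescaling the constant in front of $\delta$; a union bound over the $O(1)$ high-probability events gives the stated failure probability. I expect the main difficulty to lie not in any individual estimate but in the rotation-invariance argument of the second paragraph --- establishing that the phase-rotated perpendicular rows are genuinely standard Gaussian and independent of $\bs z$ --- since it is precisely this independence that turns the nonlinear, ratio-type statistic $\bs H_{\bs z}$ into a linear Gaussian embedding amenable to Theorem~\ref{thm:gen-rip}.
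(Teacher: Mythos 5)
Your proof is correct, and its engine is the same as the paper's (Sec.~\ref{sec:proof-rip-Ax}): the Pythagorean split $\|\bs A_{\bs z}\bs u\|^2\approx\scp{\bar{\bs x}}{\bs u}^2+\|\bs u^\perp\|^2$, and the fact that the phases $\bs z$ depend only on the components of the rows of $\bs\Phi$ along $\bs x$, so that, by rotation invariance of the Gaussian law, the phase-rotated orthogonal components are fresh \iid Gaussians independent of $\bs z$ --- which is exactly what lets Thm~\ref{thm:gen-rip} absorb the nonlinearity. Where you differ, usefully, is in the packaging. The paper runs this randomness-splitting argument twice, in two disguises: once inside Lemma~\ref{lem:sign-product-embedding} (the complex sign-product embedding, which controls both $\bs\alpha$-rows jointly through $|\scp{\signc(\bs\Phi\bs x)}{\bs\Phi\bs u}|$), and once more in the main proof for $\bs H_{\bs z}$, via an explicit rotation $\bs R$, conditioning on $\bs g=\bs\Phi\bs e_1$, and the identity $\bs G'=\bs D^\Re_{\bs z'}\bs G^\Im-\bs D^\Im_{\bs z'}\bs G^\Re\sim\cl N^{m\times(n-1)}(0,1)$. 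Your single row-wise construction $(\bs g_i^\perp,\bs h_i^\perp)\mapsto(\hat{\bs w}_i,\tilde{\bs w}_i)$ does both jobs at once: your $\tilde{\bs W}$ is (up to the rotation framing) the paper's $\bs G'$, while the $\bs\alpha$-rows collapse to the scalar $\ell_1$-statistic $\sum_i R_i=\|\bs\Phi\bs x\|_1$ (Lemma~\ref{lem:concentration-L1Phi-norm}) plus inner products of $\bs u^\perp$ with two single Gaussian vectors. This buys three economies: you never need Lemma~\ref{lem:sign-product-embedding} as a standalone statement; you never need the projection Lemma~\ref{lem:gmw-project-space}, since your residual suprema are indexed directly by $\cl K^\perp\subset\cl K-\bb R\bs x$ and monotonicity of the width gives exactly the quantity in \eqref{eq:sample-complexity}; and you avoid the paper's extra RIP event $|\frac1m\|\bs\Phi\bs u\|^2-1|\leq\kappa^2-1$ over $\cl K$ (needed there to square the linear embedding estimate), because your $\bs\alpha$-expressions are already in squared-ready form. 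What the paper's organization buys in exchange is a reusable intermediate result of independent interest. Two places to tighten in a full write-up: Thm~\ref{thm:gen-rip} should formally be applied to the cone $\cl K^\perp=\{\bs u-\scp{\bar{\bs x}}{\bs u}\bar{\bs x}:\bs u\in\cl K\}$ viewed inside $\bs x^\perp\cong\bb R^{n-1}$ (not literally to $\cl K-\bb R\bs x$, which does not lie in $\bs x^\perp$), with $w(\cl K^\perp\cap\bb S^{n-1})\leq w((\cl K-\bb R\bs x)\cap\bb S^{n-1})$ by monotonicity; and your bound on $\sup_{\bs u}|\bar{\bs w}^\top\bs u^\perp|$ needs the Lipschitz-concentration step of Lemma~\ref{lem:LD-concent} made explicit, since the Gaussian width alone controls only the expectation of that supremum, not its deviations at the scale $\delta\sqrt m$ required for the stated failure probability $C\exp(-c\delta^2m)$.
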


Therefore, if $\bs x \in \cl K$ and $\cl K$ is symmetric (as for the sets of sparse signals and low-rank matrices, or for any union of subspaces), we have $\cl K - \cl K - \bb R \bs x \subset \cl K^{(3)} := \cl K + \cl K + \cl K$, and  
we deduce that, from Thm~\ref{thm:sig-dir-rec-noiseless-po-cs} and Thm~\ref{thm:rip-for-Ax}, we can exactly recover \whp the direction of a signal $\bs x \in \cl K$ from the phase of its complex Gaussian random measurements provided $m$ is large compared to the intrinsic complexity of $\cl K^{(3)}$ --- as measured by $w^2(\cl K^{(3)} \cap \bb S^{n-1})$. 

In Sec.~\ref{sec:experiments}, we illustrate this result in the case of phase-only complex Gaussian random measurements of sparse signals, with a recovery of their direction ensured by \ref{eq:BP-def}. In particular, we show there that PO-CS requires about twice the number of measurements required for perfect signal recovery in linear CS. This is naively expected as the model \eqref{eq:PO-CS} provides $m$ constraints (associated with $m$ phases) compared to \eqref{eq:cs-model} that delivers $2m$ independent observations.  

\section{Robust and Stable Signal Direction Estimation}
\label{sec:noisy-phase-sensing}

Recovering a low-complexity signal from its phase-only observations can be made both robust and stable: we can allow for some noise in the PO-CS model \eqref{eq:PO-CS}, as well as some signal model mismatch ---~assuming $\bs x \notin \cl K$ with small $e_0(\bs x, \cl K)$ in \eqref{eq:l2-l1-inst-opt-gen} --- while still accurately estimating the direction of $\bs x$.

Given an unknown signal $\bs x \in \bb R^n$, we consider the noisy PO-CS model 
\begin{equation}
  \label{eq:PO-CS-noisy}
  \bs z = \signc (\bs A \bs x) + \bs \epsilon = \bs z_0 + \bs \epsilon,
\end{equation}
where $\bs \epsilon \in \bb C^{m}$ stands for some bounded noise with $\|\bs \epsilon\|_\infty \leq \tau$ for some level $\tau > 0$.
We note that this model encompasses a direct disturbance of the phase of \eqref{eq:PO-CS}; if $\bs z = e^{\im \bs \xi} \odot \signc (\bs A \bs x)$, where the exponential applies componentwise, $\odot$ is the Hadamard product, and $\|\bs \xi\|_\infty \leq \tau$, then $\bs z = \signc (\bs A \bs x)  + \bs \epsilon$, with $\bs \epsilon := (e^{\im \bs \xi} - 1) \odot \signc (\bs A \bs x)$ such that $\|\bs \epsilon\|_\infty \leq \tau$ (since $|e^{\im \mu} - 1| \leq |\mu|$ for $\mu \in \bb R$). 
\medskip

As the model \eqref{eq:PO-CS-noisy} does not depend on the signal amplitude, we can still work under the hypothesis that $\|\bs A \bs x\|_1 = \kappa \sqrt m$ for some $\kappa > 0$. From Lemma~\ref{lem:concentration-L1Phi-norm} and \eqref{eq:concentration-L1Phi-norm}, this shows that, for $\bs A = \bs \Phi/\sqrt m$ with $\bs \Phi \sim \cl N_{\bb C}(0,2)$, $\delta > 0$ and $\kappa = \sqrt{\pi/2}$, $(1+\delta)^{-1} \leq \|\bs x\| \leq (1-\delta)^{-1}$ with probability exceeding $1 - C \exp(-c \delta^2 m)$.

Under this hypothesis, given the definition of $\bs A_{\bs v}$ for any $\bs v \in \bb C^m$ in \eqref{eq:equiv-cs-model}, we find
$$
\bs A_{\bs z} \bs x = \bs A_{\bs z_0} \bs x + \bs A_{\bs \epsilon} \bs x = \bs e_1 + \bs \epsilon_{\bs x},
$$
with $\bs \epsilon_{\bs x} := \bs A_{\bs \epsilon} \bs x$. In other words, if $\|\bs \epsilon_{\bs x}\| \leq \varepsilon$ for some noise level $\varepsilon > 0$, the signal $\bs x$ respects the following $\ell_2$-fidelity constraint:
\begin{equation}
  \label{eq:1}
  \|\bs A_{\bs z} \bs u - \bs e_1\| \leq \varepsilon.  
\end{equation}
Moreover, in the case where $\bs A$ is a normalized complex Gaussian matrix, it is easy to estimate $\varepsilon$. 
\begin{lemma}
  Given a signal $\bs x \in \bb R^n$, and its noisy PO-CS measurements $\bs z = \signc(\bs A \bs x) + \bs \epsilon$ with $\bs A = \bs \Phi/\sqrt m$, $\bs \Phi \sim \cl N^{m \times n}_{\bb C}(0,2)$, and $\|\bs \epsilon\|_\infty \leq \tau$, if $\|\bs A \bs x\|_1 = \kappa \sqrt m$ with $\kappa = \sqrt{\pi/2}$, then, given $\delta > 0$, 
  $$
  \ts \|\bs A_{\bs \epsilon} \bs x\| \leq \sqrt 2 \tau \frac{1+\delta}{1-\delta}
  $$
  with probability exceeding $1-C\exp(-c \delta^2 m)$.
\end{lemma}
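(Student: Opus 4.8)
The plan is to split $\bs A_{\bs\epsilon}\bs x$ according to the block structure of \eqref{eq:Ax-def} --- the two ``normalization'' rows carrying $\bs\alpha_{\bs\epsilon}=\bs A^*\bs\epsilon/(\kappa\sqrt m)$ and the $m$ ``phase-consistency'' rows forming $\bs H_{\bs\epsilon}$ --- bound each block deterministically in terms of $\tau$ and $\bs w:=\bs A\bs x$, and only at the very end invoke the randomness of $\bs\Phi$ to control $\|\bs A\bs x\|$. Writing $w_i=(\bs A\bs x)_i$ and using that $\bs x$ is real, the first two rows combine into a single complex scalar, $((\bs\alpha^\Re_{\bs\epsilon})^\top\bs x)^2+((\bs\alpha^\Im_{\bs\epsilon})^\top\bs x)^2=|(\bs\alpha_{\bs\epsilon})^\top\bs x|^2$ with $(\bs\alpha_{\bs\epsilon})^\top\bs x=\tfrac{1}{\kappa\sqrt m}\sum_i\epsilon_i\bar w_i$, while $(\bs H_{\bs\epsilon}\bs x)_i=\Im(\epsilon_i^*w_i)$. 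By Pythagoras this gives the exact identity $\|\bs A_{\bs\epsilon}\bs x\|^2=\tfrac{1}{\kappa^2m}\big|\sum_i\epsilon_i\bar w_i\big|^2+\sum_i\big(\Im(\epsilon_i^*w_i)\big)^2$, which is the object to estimate.

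For the normalization block I would use $|\epsilon_i|\le\tau$, the triangle inequality, and crucially the \emph{normalization hypothesis} $\|\bs A\bs x\|_1=\kappa\sqrt m$: since $\big|\sum_i\epsilon_i\bar w_i\big|\le\tau\sum_i|w_i|=\tau\|\bs A\bs x\|_1=\tau\kappa\sqrt m$, this block contributes at most $\tau^2$. For the $\bs H_{\bs\epsilon}$ block I would use $|\Im(\epsilon_i^*w_i)|\le|\epsilon_i|\,|w_i|\le\tau|w_i|$, so that $\|\bs H_{\bs\epsilon}\bs x\|^2\le\tau^2\sum_i|w_i|^2=\tau^2\|\bs A\bs x\|^2$. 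Together these give the purely deterministic bound $\|\bs A_{\bs\epsilon}\bs x\|^2\le\tau^2\big(1+\|\bs A\bs x\|^2\big)$. Observe that Cauchy--Schwarz applied to the same normalization forces $\|\bs A\bs x\|\ge\|\bs A\bs x\|_1/\sqrt m=\kappa>1$, so the stray additive $\tau^2$ is dominated by the multiplicative $\tau^2\|\bs A\bs x\|^2$ term and can be folded into it, leaving a bound of the form $\|\bs A_{\bs\epsilon}\bs x\|\le\tau\,\|\bs A\bs x\|$ up to the tracking of this constant.

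It then remains to control $\|\bs A\bs x\|$ with high probability, and this is where the factor $\sqrt2$ enters. I would invoke the complex-to-real reduction already used in the excerpt: for real $\bs x$, $\|\bs A\bs x\|^2=\tfrac1m\|\bar{\bs\Phi}\bs x\|^2$ with $\bar{\bs\Phi}\sim\cl N^{2m\times n}(0,1)$, so $\|\bs A\bs x\|^2/\|\bs x\|^2$ is a $\chi^2_{2m}$ variable of mean $2$ whose standard sub-exponential tail yields $\|\bs A\bs x\|\le\sqrt2\,(1+\delta)\,\|\bs x\|$ with probability at least $1-C\exp(-c\delta^2m)$. The sensing matrix $\bs A=\bs\Phi/\sqrt m$ is therefore \emph{not} an isometry but concentrates at $\|\bs A\bs x\|\approx\sqrt2\,\|\bs x\|$, and this is exactly the origin of the $\sqrt2$. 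Feeding in the a priori estimate $\|\bs x\|\le(1-\delta)^{-1}$ --- which, as recalled just before the lemma, follows from Lemma~\ref{lem:concentration-L1Phi-norm} together with $\|\bs A\bs x\|_1=\kappa\sqrt m$ --- gives $\|\bs A\bs x\|\le\sqrt2\,(1+\delta)/(1-\delta)$, and substituting into the deterministic bound closes the argument.

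The two block inequalities are routine; the delicate points are the bookkeeping. First, one must keep the constant multiplying $\tau\|\bs A\bs x\|$ sharp, which is precisely why the lower bound $\|\bs A\bs x\|\ge\kappa$ (rather than merely $\ge1$) matters when absorbing the additive normalization contribution: this is the step most prone to leaving a spurious constant larger than $\sqrt2$, and it is the main obstacle to reaching the clean stated form. Second, the two high-probability statements --- the upper tail of $\|\bs A\bs x\|$ and the bound $\|\bs x\|\le(1-\delta)^{-1}$ --- must be placed on a common event, so a union bound with matching constants $c,C$ is needed to preserve a single $1-C\exp(-c\delta^2m)$ failure probability. Everything else is deterministic manipulation of the identity for $\|\bs A_{\bs\epsilon}\bs x\|^2$ derived above.
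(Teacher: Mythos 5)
Your proposal follows the paper's skeleton exactly up to the last step: the same Pythagoras identity $\|\bs A_{\bs\epsilon}\bs x\|^2 = \frac{1}{\kappa^2 m}|\scp{\bs\epsilon}{\bs A\bs x}|^2 + \|\Im(\bs D_{\bs\epsilon}^*\bs A\bs x)\|^2$, the same H\"older bound $|\scp{\bs \epsilon}{\bs A \bs x}| \le \tau\|\bs A\bs x\|_1 = \tau\kappa\sqrt m$, the same entrywise bound $\|\bs H_{\bs\epsilon}\bs x\| \le \tau\|\bs A \bs x\|$, hence the same deterministic inequality $\|\bs A_{\bs\epsilon}\bs x\|^2 \le \tau^2(1+\|\bs A\bs x\|^2)$, and the same use of Lemma~\ref{lem:concentration-L1Phi-norm} to get $\|\bs x\|\le(1-\delta)^{-1}$. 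Where you diverge is the endgame, and there your argument does not close: folding the additive $\tau^2$ into the quadratic term via $\|\bs A\bs x\|\ge\kappa$ and then invoking $\|\bs A\bs x\|\le\sqrt2(1+\delta)\|\bs x\|$ yields the constant $\sqrt{2(1+2/\pi)}\approx 1.81$, strictly larger than the stated $\sqrt 2$ --- a shortfall you acknowledge but do not resolve. Even without the lossy folding, keeping $1+\|\bs A\bs x\|^2$ and bounding it by $1 + 2\tfrac{(1+\delta)^2}{(1-\delta)^2} \le 3\tfrac{(1+\delta)^2}{(1-\delta)^2}$ only gives $\sqrt3\,\tau\tfrac{1+\delta}{1-\delta}$. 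So, as a proof of the lemma as stated, there is a genuine gap in the constant.

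The gap is instructive, though, because it stems from a point on which you and the paper disagree. The paper's proof asserts the joint event $|\inv{\kappa m}\|\bs\Phi\bs x\|_1 - \|\bs x\||\le\delta\|\bs x\|$ and $|\inv m\|\bs\Phi\bs x\|^2 - \|\bs x\|^2|\le\delta\|\bs x\|^2$, i.e.\ that $\inv m\|\bs\Phi\bs x\|^2$ concentrates at $\|\bs x\|^2$, and then bounds $\tau^2\big(1+(1+\delta)\|\bs x\|^2\big) \le \tau^2\big(1+\tfrac{1+\delta}{(1-\delta)^2}\big) \le \tfrac{2(1+\delta)^2}{(1-\delta)^2}\tau^2$; in the paper the $\sqrt2$ is therefore simply ``two terms, each at most $\tfrac{(1+\delta)^2}{(1-\delta)^2}$'' --- it has nothing to do with $\bs A$ failing to be an isometry, contrary to your diagnosis. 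On the other hand, your computation of the scaling is the correct one under the paper's own convention $\cl N_{\bb C}(0,2)\sim\cl N(0,1)+\im\cl N(0,1)$: then $\|\bs\Phi\bs x\|^2\sim\|\bs x\|^2\chi^2_{2m}$ has mean $2m\|\bs x\|^2$, so $\inv m\|\bs\Phi\bs x\|^2$ concentrates at $2\|\bs x\|^2$, not $\|\bs x\|^2$. In other words, your failure to reach $\sqrt 2$ exposes a factor-of-two inconsistency in the paper's $\ell_2$-concentration step (with your scaling, the best constant along this route is $\sqrt 3$); conversely, if one accepts the paper's concentration claim, your Cauchy--Schwarz folding $1\le\|\bs A\bs x\|^2/\kappa^2$ actually produces the slightly sharper constant $\sqrt{1+2/\pi}<\sqrt2$. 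You should surface this discrepancy explicitly rather than treat it as an unresolved bookkeeping difficulty: either the lemma's constant or the concentration claim must change, and your write-up currently straddles the two.
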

\begin{proof}
  If $\bs \Phi \sim \cl N^{m \times n}_{\bb C}(0,2)$, Lemma~\ref{lem:concentration-L1Phi-norm} and the $\ell_2$-norm concentration of Gaussian random projections~\cite{Ver12,baraniuk2008simple} show that we have jointly
$$
\ts  | \inv{\kappa m} \|\bs \Phi \bs x\|_1  - \|\bs x\| \big| \leq \delta \|\bs x\|,\quad | \inv{m} \|\bs \Phi \bs x\|^2  - \|\bs x\|^2 \big| \leq \delta \|\bs x\|^2,
$$
with probability exceeding $1-C\exp(-c \delta^2 m)$. Under this event, for $\bs A = \bs \Phi / \sqrt m$, and under the hypothesis $\sqrt m \|\bs A \bs x\|_1 = \|\bs \Phi \bs x\|_1 = \kappa m$, we find from \eqref{eq:equiv-cs-model}
\begin{align*}
  \ts \|\bs A_{\bs \epsilon} \bs x\|^2&\ts = |\scp{\bs \alpha_{\bs \epsilon}}{\bs x}|^2 + \|\bs H_{\bs \epsilon} \bs x\|^2 = \inv{\kappa^2 m} |\scp{\bs \epsilon}{\bs A\bs x}|^2 + \|\Im(\diag(\bs \epsilon)^* \bs A \bs x)\|^2\\
                               &\ts \leq \frac{\tau^2}{\kappa^2 m}\|\bs A \bs x\|^2_1 + \tau^2 \|\bs A \bs x\|^2 \leq \tau^2 (1 + (1+\delta) \|\bs x\|^2) \leq \tau^2 (1 + \frac{1+\delta}{(1-\delta)^2\kappa^2 m^2} \|\bs \Phi\bs x\|_1^2)\\
  &\ts \leq \frac{2(1+\delta)^2}{(1-\delta)^2} \tau^2.
\end{align*}
where we used Holder's inequality for the first inequality. This concludes the proof.

\end{proof}

From these observations, the definition of recoverable set in Sec.~\ref{sec:sensing-model} provides the following theorem.

\begin{theorem}[Signal direction estimation in noisy PO-CS]  
\label{thm:sig-dir-rec-noisy-po-cs}
Let $\cl K \subset \bb R^n$ be a symmetric cone that is $(\Delta_{\cl K},\delta_0)$-recoverable with the instance optimal algorithm $\Delta_{\cl K}$. Given $\bs A \in \bb C^{m \times n}$, the signal $\bs x \in \bb R^n$ observed from its $m$ noisy PO-CS measurements $\bs z = \signc(\bs A \bs x) + \bs \epsilon$, and the hypothesis that $\|\bs A \bs x\|_1 = \kappa \sqrt m$ for some $\kappa > 0$, if the matrix $\bs A_{\bs z} \in \bb R^{(m+2)\times n}$ built from $\bs A$ and $\bs z$ respects the RIP$(\cl K - \cl K, \delta)$ with $0<\delta < \delta_0$, and if $\|\bs A_{\bs \epsilon} \bs x\| \leq \varepsilon$ for some $\varepsilon >0$, then the vector $\hat{\bs x} = \Delta_{\cl K}(\bs e_1, \bs A_{\bs z}; \varepsilon)$ estimates $\bs x$ according to the bound~\eqref{eq:l2-l1-inst-opt-gen}.  
\end{theorem}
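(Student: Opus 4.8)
The plan is to mirror the noiseless argument behind Thm~\ref{thm:sig-dir-rec-noiseless-po-cs}, recasting the noisy PO-CS observation as a noisy \emph{linear} CS model so that the recoverable-set machinery of Def.~\ref{def:recoverable-set} applies almost verbatim. The first ingredient I would record is that the map $\bs v \mapsto \bs A_{\bs v}$ defined in \eqref{eq:Ax-def} is $\bb R$-linear in $\bs v$: the scalar rows $\bs \alpha^\Re_{\bs v}, \bs \alpha^\Im_{\bs v}$ are the real and imaginary parts of $\bs A^* \bs v/(\kappa\sqrt m)$, while $\bs H_{\bs v} = \bs D_{\bs v}^\Re \bs A^\Im - \bs D_{\bs v}^\Im \bs A^\Re$ depends linearly on $\bs v^\Re, \bs v^\Im$ through $\bs D_{\bs v} = \diag(\bs v)$. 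Writing $\bs z = \bs z_0 + \bs \epsilon$ with $\bs z_0 = \signc(\bs A \bs x)$, this linearity gives $\bs A_{\bs z} = \bs A_{\bs z_0} + \bs A_{\bs \epsilon}$, hence $\bs A_{\bs z}\bs x = \bs A_{\bs z_0}\bs x + \bs A_{\bs \epsilon}\bs x$.

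Next I would invoke the noiseless identity underlying \eqref{eq:equiv-cs-model}: under the standing normalization $\|\bs A \bs x\|_1 = \kappa \sqrt m$ one has $\bs A_{\bs z_0}\bs x = \bs e_1$ (this is precisely the equality $\bs A_{\bs z}\bs x = \bs e_1$ of the noiseless case, now read for the exact phase $\bs z_0$). Setting $\bs \epsilon_{\bs x} := \bs A_{\bs \epsilon}\bs x$, I obtain $\bs A_{\bs z}\bs x = \bs e_1 + \bs \epsilon_{\bs x}$, equivalently $\bs e_1 = \bs A_{\bs z}\bs x - \bs \epsilon_{\bs x}$. The standing hypothesis $\|\bs A_{\bs \epsilon}\bs x\| = \|\bs \epsilon_{\bs x}\| \leq \varepsilon$ (which, in the complex Gaussian case, is quantified by the preceding Lemma) then exhibits $\bs e_1$ as a noisy linear measurement of $\bs x$ taken with sensing matrix $\bs A_{\bs z}$ and additive noise $-\bs \epsilon_{\bs x}$ of Euclidean norm at most $\varepsilon$.

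Finally I would apply Def.~\ref{def:recoverable-set} with the identifications $\bs A_{\bs z}$ as the sensing matrix, $\bs e_1 = \bs A_{\bs z}\bs x + (-\bs \epsilon_{\bs x})$ as the measurement vector, and $-\bs \epsilon_{\bs x}$ as the noise. Since $\bs A_{\bs z}$ satisfies RIP$(\cl K - \cl K, \delta)$ with $0 < \delta < \delta_0$ and $\|-\bs \epsilon_{\bs x}\| \leq \varepsilon$, recoverability guarantees that $\hat{\bs x} = \Delta_{\cl K}(\bs e_1, \bs A_{\bs z}; \varepsilon)$ satisfies the instance optimality bound \eqref{eq:l2-l1-inst-opt-gen}, $\|\bs x - \hat{\bs x}\| \leq C e_0(\bs x, \cl K) + D\varepsilon$, which is the claimed estimate. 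I expect no serious obstacle: the three technical inputs---the $\bb R$-linearity of $\bs v \mapsto \bs A_{\bs v}$, the noiseless identity $\bs A_{\bs z_0}\bs x = \bs e_1$, and the norm bound on $\bs \epsilon_{\bs x}$---are already in hand. The only point that demands care, and the one real departure from the noiseless theorem, is to keep the modeling-error term $e_0(\bs x, \cl K)$ in the final bound: here $\bs x$ is an arbitrary element of $\bb R^n$ rather than a member of $\cl K$, so we cannot set $e_0 = 0$ as in Thm~\ref{thm:sig-dir-rec-noiseless-po-cs}.
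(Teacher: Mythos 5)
Your proposal is correct and takes essentially the same approach as the paper: the paper likewise uses the $\bb R$-linearity of $\bs v \mapsto \bs A_{\bs v}$ to write $\bs A_{\bs z}\bs x = \bs A_{\bs z_0}\bs x + \bs A_{\bs \epsilon}\bs x = \bs e_1 + \bs \epsilon_{\bs x}$ under the normalization $\|\bs A\bs x\|_1 = \kappa\sqrt m$, interprets $\bs e_1$ as a noisy linear measurement of $\bs x$ by $\bs A_{\bs z}$ with noise bounded by $\varepsilon$, and concludes directly from Definition~\ref{def:recoverable-set}. Your remark about keeping the modeling-error term $e_0(\bs x,\cl K)$, since $\bs x$ need not lie in $\cl K$, matches the paper's intent exactly.
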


Interestingly, the RIP condition imposed on $\bs A_{\bs z}$ by this theorem can be met for complex Gaussian random matrices. Indeed, in the case where $\bs A = \bs \Phi/\sqrt m$ with $\bs \Phi \sim \cl N_{\bb C}^{m \times n}(0, 2)$, if $m$ is sufficiently large and the noise level $\tau$ is small enough, the following theorem shows that \whp the matrix $\bs A_{\bs z} = \bs A_{\bs z_0} + \bs A_{\bs \epsilon}$ respects the RIP over $\cl K$ with a constant reduced by $\tau$.

\begin{theorem}
\label{thm:noisy-rip}
  Given $\delta,\tau >0$ with $0<\delta + 9\tau <1$, $\bs x \in \bb R^n$, $\bs A = \bs \Phi/\sqrt m$ with $\bs \Phi \sim \cl N_{\bb C}(0,2)$, $\bs \epsilon \in \bb C^m$ with $\|\bs \epsilon\|_\infty \leq \tau$, and $\bs z = \signc(\bs A \bs x) + \bs \epsilon$, if
  \begin{equation}
    \label{eq:samp-complex-noisy-sensing-mtx}
    m \geq C (1+\delta^{-2}) w\big((\cl K - \bb R \bs x) \cap \bb S^{n-1}\big)^2,
  \end{equation}
  then, the matrix $\bs A_{\bs z}$ defined in \eqref{eq:Ax-def} from $\bs z$ and $\bs A$ with $\kappa = \sqrt{\pi/2}$ satisfies the RIP$(\cl K, \delta + 9\tau)$ with probability exceeding $1 - C \exp(-c \delta^2 m)$.
\end{theorem}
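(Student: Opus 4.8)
The plan is to exploit the linearity of the map $\bs v \mapsto \bs A_{\bs v}$ in \eqref{eq:Ax-def}, which yields the exact decomposition $\bs A_{\bs z} = \bs A_{\bs z_0} + \bs A_{\bs \epsilon}$ with $\bs z_0 = \signc(\bs A \bs x)$, and to treat $\bs A_{\bs \epsilon}$ as a small \emph{deterministic} perturbation of the noiseless matrix $\bs A_{\bs z_0}$. First I would invoke Theorem~\ref{thm:rip-for-Ax}: since \eqref{eq:samp-complex-noisy-sensing-mtx} coincides with \eqref{eq:sample-complexity}, the noiseless matrix $\bs A_{\bs z_0}$ satisfies RIP$(\cl K, \delta)$ with probability at least $1 - C\exp(-c\delta^2 m)$, so that $\sqrt{1-\delta}\,\|\bs u\| \leq \|\bs A_{\bs z_0} \bs u\| \leq \sqrt{1+\delta}\,\|\bs u\|$ for all $\bs u \in \cl K$. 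On the same event, after a union bound with the $\ell_2$-concentration of Theorem~\ref{thm:gen-rip} applied to $\bs A = \bs \Phi/\sqrt m$ (whose cost is again $m \gtrsim \delta^{-2} w^2(\cl K \cap \bb S^{n-1})$, dominated by \eqref{eq:samp-complex-noisy-sensing-mtx} because $\cl K \subset \cl K - \bb R \bs x$), I would record the uniform bound $\|\bs A \bs u\|^2 \leq 2(1+\delta)\|\bs u\|^2$ over $\cl K$, recalling that for $\bs \Phi \sim \cl N_{\bb C}(0,2)$ one has $\|\bs A \bs u\|^2 \approx 2\|\bs u\|^2$.

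The core step is a uniform bound on the perturbation $\sup_{\bs u \in \cl K \cap \bb S^{n-1}} \|\bs A_{\bs \epsilon} \bs u\|$. Writing $\|\bs A_{\bs \epsilon} \bs u\|^2 = |\scp{\bs \alpha_{\bs \epsilon}}{\bs u}|^2 + \|\bs H_{\bs \epsilon} \bs u\|^2$ and using $\scp{\bs \alpha_{\bs \epsilon}}{\bs u} = \frac{1}{\kappa \sqrt m}\scp{\bs \epsilon}{\bs A \bs u}$ together with $\|\bs \epsilon\|_\infty \leq \tau$, Hölder's inequality gives $|\scp{\bs \alpha_{\bs \epsilon}}{\bs u}| \leq \frac{\tau}{\kappa \sqrt m}\|\bs A \bs u\|_1$, while $\bs H_{\bs \epsilon} \bs u = \Im(\diag(\bs \epsilon)^* \bs A \bs u)$ yields the componentwise bound $\|\bs H_{\bs \epsilon} \bs u\| \leq \tau \|\bs A \bs u\|$. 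Bounding $\|\bs A \bs u\|_1 \leq \sqrt m\,\|\bs A \bs u\|$ by Cauchy--Schwarz (which conveniently avoids needing a uniform $\ell_1$ concentration) and feeding in the $\ell_2$ estimate from the first paragraph, both contributions reduce to $\|\bs A \bs u\|^2 \leq 2(1+\delta)\|\bs u\|^2$, giving $\|\bs A_{\bs \epsilon} \bs u\| \leq c_0 \tau \|\bs u\|$ for an explicit absolute constant $c_0$ of order $2$ (with $\kappa^2 = \pi/2$). The crucial feature is that this estimate is deterministic once the good event for $\bs A$ holds: it is valid simultaneously for \emph{every} admissible noise $\bs \epsilon$ with $\|\bs \epsilon\|_\infty \leq \tau$ and \emph{every} $\bs u \in \cl K$, so no independence between $\bs \epsilon$ and $\bs \Phi$ is required.

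Finally I would combine the two pieces. Expanding $\|\bs A_{\bs z} \bs u\|^2 = \|\bs A_{\bs z_0} \bs u\|^2 + 2\scp{\bs A_{\bs z_0} \bs u}{\bs A_{\bs \epsilon} \bs u} + \|\bs A_{\bs \epsilon} \bs u\|^2$ and applying the RIP of $\bs A_{\bs z_0}$, Cauchy--Schwarz on the cross term, and the perturbation bound, I obtain $\big|\,\|\bs A_{\bs z}\bs u\|^2 - \|\bs u\|^2\,\big| \leq \big(\delta + 2 c_0 \sqrt{1+\delta}\,\tau + c_0^2 \tau^2\big)\|\bs u\|^2$ for all $\bs u \in \cl K$. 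It then remains to verify $2 c_0 \sqrt{1+\delta}\,\tau + c_0^2 \tau^2 \leq 9\tau$; since the hypothesis $\delta + 9\tau < 1$ forces $\tau < 1/9$, the quadratic term is dominated by a small multiple of $\tau$, and with $c_0$ of order $2$ the total excess stays comfortably below $9\tau$, yielding RIP$(\cl K, \delta + 9\tau)$ on the intersection of the two high-probability events. The main obstacle I anticipate is the bookkeeping of absolute constants so that they fit inside the prescribed slack $9\tau$ uniformly over the admissible range of $\delta,\tau$; the probabilistic content, by contrast, is entirely inherited from Theorem~\ref{thm:rip-for-Ax} and Theorem~\ref{thm:gen-rip}, and the perturbation analysis itself is purely deterministic.
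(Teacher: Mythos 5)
Your proposal is correct and follows essentially the same route as the paper's own proof: decompose $\bs A_{\bs z} = \bs A_{\bs z_0} + \bs A_{\bs \epsilon}$ by linearity, obtain RIP$(\cl K,\delta)$ for $\bs A_{\bs z_0}$ from Thm~\ref{thm:rip-for-Ax}, bound $\sup_{\bs u \in \cl K \cap \bb S^{n-1}}\|\bs A_{\bs \epsilon}\bs u\|$ by a constant times $\tau$ deterministically on the $\ell_2$-concentration event supplied by Thm~\ref{thm:gen-rip} (via H\"older/Cauchy--Schwarz and $\|\bs \epsilon\|_\infty \leq \tau$), and conclude by combining the two pieces. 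The only differences are cosmetic and slightly in your favor: you sum the two perturbation terms in quadrature rather than by the triangle inequality, and you correctly retain the factor $2$ in $\|\bs A \bs u\|^2 \approx 2\|\bs u\|^2$ for $\bs \Phi \sim \cl N_{\bb C}(0,2)$ --- a factor the paper's event $\cl E_1$ silently drops --- and your tighter constant $c_0 \approx 2.56$ is precisely what keeps the total excess within the prescribed slack $9\tau$ even after that correction.
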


Therefore, in the case of PO-CS associated with a complex Gaussian random matrix, and if the symmetric cone $\cl K$ is $(\Delta_{\cl K},\delta_0)$-recoverable, as soon as $m$ is large compared to $(1+\delta^{-2}) w\big((\cl K - \bb R \bs x) \cap \bb S^{n-1}\big)^2$ with $\delta + 9 \tau \leq \delta_0$, then, Thm~\ref{thm:sig-dir-rec-noisy-po-cs} and Thm~\ref{thm:noisy-rip} tell us that we can stably and robustly recover the direction of $\bs x \in \bb R^n$ --- in the sense of~\eqref{eq:l2-l1-inst-opt-gen}~--- from its PO-CS measurements. 

\begin{remark}
\label{rem:cond-tau}
While we did not optimize it in the corresponding proof, the condition imposed on $\tau$ in Thm.~\ref{thm:noisy-rip} makes sense. Indeed, combining Thm~\ref{thm:sig-dir-rec-noisy-po-cs} and Thm~\ref{thm:noisy-rip} we conclude that the estimation of the signal direction is possible if $\tau < \delta_0/9 = O(1)$ provided $m$ is large enough (as imposed by setting $\delta = \delta_0 - 9\tau$ in \eqref{eq:samp-complex-noisy-sensing-mtx}).
Moreover, for the noisy PO-CS model $\bs z = e^{\im \bs \xi} \odot \signc(\bs A \bs x)$, if $\xi_k \sim_{\iid} \cl U([-\pi, \pi])$ for $k \in [m]$, then the phase of each component of $\bs z$ is uniformly distributed over $[-\pi, \pi]$ whatever the value of $\bs x$; estimating $\bs x/\|\bs x\|$ is then impossible. In this case, we have $\tau = {\|(e^{\im \bs \xi} - \bs 1) \odot \signc(\bs A \bs x)\|_\infty} = \|e^{\im \bs \xi} - \bs 1\|_\infty > \|\bs \xi\|_\infty$, with $\|\bs \xi\|_\infty$ arbitrary close to $\pi$ (\whp, for large value of $m$).  Although
the question of the existence of a robust algorithm for $\tau \in [\delta_0/9, \pi]$ remains open, we thus see that imposing $\tau = O(1)$ to recover the signal direction is realistic.
\end{remark}

\section{Proofs}
\label{sec:proofs}

This section is devoted to proving the RIP of a matrix built in \eqref{eq:Ax-def} from a complex Gaussian random matrix $\bs A$ and a signal $\bs x \in \bb R^n$, in the context of noiseless and noisy PO-CS. We first introduce a few useful lemmata. 

\subsection{Auxiliary lemmata}
\label{sec:auxilliary-lemmatas}

We first need this classical result from Ledoux and Talagrand
\cite[Eq. 1.6]{ledoux2013probability}. 
\begin{lemma}
\label{lem:LD-concent}
If the function $F: \bb R^n \to \bb R$ is Lipschitz with constant $\lambda$, \ie $|F(\bs u) - F(\bs u')| \leq \lambda \|\bs u - \bs u'\|$ for all $\bs u, \bs u' \in \bb R^n$, then, for $r>0$ and $\bs \gamma \sim \mathcal{N}^{n}(0,1)$, 
\begin{equation}
\ts \bb P \big( \big| F(\bs \gamma) -\bb E\big(F(\bs \gamma)\big) \big| > r\big) \leq 2 \exp(-\frac{1}{2} r^2 \lambda^{-2}).
\end{equation}
\end{lemma}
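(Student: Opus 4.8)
The plan is to establish the stated sub-Gaussian tail as a standard consequence of Gaussian concentration, proceeding in three stages: a reduction to smooth $F$, a moment-generating-function (MGF) bound with the sharp constant, and a Chernoff optimization.

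First I would reduce to the case where $F$ is smooth with $\|\nabla F(\bs u)\| \le \lambda$ everywhere. A $\lambda$-Lipschitz $F$ is differentiable almost everywhere with $\|\nabla F\| \le \lambda$ (Rademacher), and convolving $F$ with a narrow Gaussian mollifier produces smooth approximants $F_\eta$ that remain $\lambda$-Lipschitz and converge to $F$ uniformly on compacta; the bound for $F$ then follows by passing to the limit, the relevant expectations converging since $F(\bs \gamma)$ is integrable. I would also center at the mean, writing $G := F - \bb E F(\bs \gamma)$, so that $\bb E G(\bs \gamma) = 0$ and $\|\nabla G\| \le \lambda$.

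The core step is the MGF estimate $\bb E \exp(t\, G(\bs \gamma)) \le \exp(t^2\lambda^2/2)$ for every $t \in \bb R$. I would obtain it through the Herbst argument driven by the Gaussian logarithmic Sobolev inequality $\mathrm{Ent}(f^2) := \bb E[f^2 \log f^2] - \bb E[f^2]\log \bb E[f^2] \le 2\,\bb E\|\nabla f\|^2$, the expectations being taken under $\bs \gamma \sim \cl N^n(0,1)$. Applying it to $f = \exp(tG/2)$ gives $\|\nabla f\|^2 \le (t^2\lambda^2/4)\exp(tG)$, and with $H(t) := \bb E\exp(t G(\bs \gamma))$ the inequality becomes the differential relation $t H'(t) - H(t)\log H(t) \le (t^2\lambda^2/2)H(t)$. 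Dividing by $t^2 H(t)$ identifies the left-hand side as the derivative of $K(t) := t^{-1}\log H(t)$, so $K'(t) \le \lambda^2/2$; since $K(t) \to \bb E G = 0$ as $t \to 0$, integrating yields $\log H(t) \le t^2\lambda^2/2$, which is the claim.

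Finally I would apply Markov to $\exp(tG)$, giving $\bb P(G > r) \le \exp(-tr + t^2\lambda^2/2)$, and optimize at $t = r/\lambda^2$ to get $\bb P(G > r) \le \exp(-r^2/(2\lambda^2))$. Running the same argument on $-F$, which is also $\lambda$-Lipschitz, controls the lower tail, and a union bound over the two events produces the factor $2$ in the stated two-sided inequality. The main obstacle is the single genuinely substantive ingredient, the Gaussian log-Sobolev inequality itself: it is what delivers the sharp constant $1/2$ in the exponent, whereas a soft, self-contained interpolation/semigroup argument along the path $\bs \gamma \sin\theta + \tilde{\bs \gamma}\cos\theta$ with Gaussian integration by parts yields only a weaker, non-sharp constant. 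I would therefore either invoke Gross's theorem directly or derive the log-Sobolev inequality by tensorizing the one-dimensional (two-point) case, the remainder of the argument being routine calculus.
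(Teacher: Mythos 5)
Your proof is correct, but note that the paper itself offers no proof of this lemma at all: it is quoted verbatim as a classical result, with a pointer to Eq.~(1.6) of Ledoux--Talagrand, and is used downstream as a black box. Your proposal is therefore a genuine (and sound) reconstruction of the cited result rather than an alternative to an argument in the paper. The route you take --- Herbst's argument driven by the Gaussian logarithmic Sobolev inequality, then Chernoff --- is one of the two standard proofs; your differential inequality for $K(t)=t^{-1}\log H(t)$, the boundary value $K(0^+)=\bb E\,G=0$, the optimization $t=r/\lambda^2$, and the union bound for the two-sided statement are all correct, and you rightly identify the LSI as the only non-routine ingredient and the place where the sharp constant $\tfrac12$ enters (the softer Maurey--Pisier interpolation argument indeed only yields $2/\pi^2$ in the exponent). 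The classical reference instead proceeds through Gaussian isoperimetry (Borell, Sudakov--Tsirel'son), which most naturally gives concentration around the \emph{median}; your semigroup/entropy route has the advantage of producing the mean-centered form stated in the lemma directly, and of generalizing to any measure satisfying an LSI, at the price of invoking Gross's theorem (or its tensorization-plus-CLT derivation, as you sketch). Two small points you should make explicit if this were written out in full: the a priori finiteness of $H(t)=\bb E\exp(tG(\bs\gamma))$, which follows from $|G(\bs u)|\le |G(\bs 0)|+\lambda\|\bs u\|$ and Gaussian integrability, and the passage to the limit in the mollification step, where $\bb E F_\eta(\bs\gamma)\to\bb E F(\bs\gamma)$ (dominated convergence) and the strict inequality $>r$ in the event must be handled by proving the bound for radii $r'<r$ and letting $r'\uparrow r$; both are routine.
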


The following lemma characterizes the concentration of the random variable $\|\bs \Phi \bs z\|_1$ for a complex Gaussian random matrix $\bs \Phi$ given $\bs z \in \bb R^{n}$. 
\begin{lemma}
  \label{lem:concentration-L1Phi-norm}
 Given $\delta >0$, $\bs z \in \bb R^n$, and $\bs \Phi \sim \cl N_{\bb C}^{m \times n}(0,2)$, we have
  \begin{equation}
    \label{eq:concentration-L1Phi-norm}
\ts \bb P\big[\big| \inv{\kappa m} \|\bs \Phi \bs z\|_1  - \|\bs z\| \big| > \delta \|\bs z\| \big] \leq C \exp(-c \delta^2 m).  
  \end{equation} 
\end{lemma}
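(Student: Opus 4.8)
The plan is to reduce the statement to a Gaussian concentration inequality for a Lipschitz function, using Lemma~\ref{lem:LD-concent}. First I would exploit the positive homogeneity of both sides: since $\|\bs \Phi(\lambda \bs z)\|_1 = |\lambda|\,\|\bs \Phi \bs z\|_1$ and $\|\lambda \bs z\| = |\lambda|\,\|\bs z\|$, the event inside the probability is invariant under rescaling of $\bs z$, so I may assume $\|\bs z\| = 1$ (the case $\bs z = \bs 0$ being trivial). For a unit real $\bs z$, each entry $(\bs \Phi \bs z)_k = \sum_j \Phi_{kj} z_j$ is a real-coefficient linear combination of the \iid\ complex Gaussian entries of row $k$; hence its real and imaginary parts are independent $\cl N(0, \|\bs z\|^2) = \cl N(0,1)$ variables, and $\bs \Phi \bs z \sim \cl N_{\bb C}^m(0,2)$. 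Writing $\bs \Phi \bs z = \bs a + \im \bs b$ with $\bs a, \bs b \sim \cl N^m(0,1)$ independent, we obtain $\|\bs \Phi \bs z\|_1 = \sum_{k=1}^m \sqrt{a_k^2 + b_k^2}$.

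Next I would identify the mean. Each term $\sqrt{a_k^2 + b_k^2}$ is Rayleigh-distributed with $\bb E \sqrt{a_k^2 + b_k^2} = \sqrt{\pi/2}$, which is exactly the constant $\kappa$ fixed in the statement; by linearity $\bb E \|\bs \Phi \bs z\|_1 = \kappa m$, so that $\bb E\big[\inv{\kappa m}\|\bs \Phi \bs z\|_1\big] = 1 = \|\bs z\|$. This is the computation that motivates the normalization $\|\bs A \bs x\|_1 = \kappa \sqrt m$ and makes the corresponding estimator unbiased. It then remains to control the fluctuations of $\|\bs \Phi \bs z\|_1$ around this mean.

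Finally I would view $F(\bs a, \bs b) := \sum_{k=1}^m \sqrt{a_k^2 + b_k^2}$ as a function of the $2m$ \iid\ standard Gaussians $(\bs a, \bs b) \in \bb R^{2m}$ and bound its Lipschitz constant. Grouping the variables into pairs $\bs w_k = (a_k, b_k)$, the map is a sum of Euclidean norms; by the reverse triangle inequality followed by Cauchy--Schwarz, $|F(\bs w) - F(\bs w')| \leq \sum_k \|\bs w_k - \bs w'_k\| \leq \sqrt m\,\|\bs w - \bs w'\|$, so $F$ is $\sqrt m$-Lipschitz. Applying Lemma~\ref{lem:LD-concent} with $\lambda = \sqrt m$ and deviation $r = \delta \kappa m$ gives $\bb P\big[\,\big|\,\|\bs \Phi \bs z\|_1 - \kappa m\,\big| > \delta \kappa m\,\big] \leq 2\exp(-\tfrac12 \delta^2 \kappa^2 m) = 2\exp(-\tfrac{\pi}{4}\delta^2 m)$, which is precisely the claimed bound (with $C = 2$, $c = \pi/4$) once one rewrites the event $\big|\inv{\kappa m}\|\bs \Phi \bs z\|_1 - 1\big| > \delta$ as $\big|\,\|\bs \Phi \bs z\|_1 - \kappa m\,\big| > \delta \kappa m$.

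I expect no genuinely hard step here; the only places demanding care are the exact evaluation of the Rayleigh mean (ensuring it equals $\kappa = \sqrt{\pi/2}$, which fixes the normalization) and the Lipschitz-constant estimate, where the $\sqrt m$ factor from Cauchy--Schwarz is exactly what converts the deviation $r = \delta \kappa m$ into an exponent linear in $m$. Tracking these two constants correctly is what produces the dimension-free rate $\exp(-c\delta^2 m)$.
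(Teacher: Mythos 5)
Your proof is correct, and it follows the paper's reduction steps exactly (homogeneity to $\|\bs z\|=1$, identification of $\bs\Phi\bs z \sim \cl N_{\bb C}^m(0,2)$, the Rayleigh mean $\kappa=\sqrt{\pi/2}$), but it differs in the final concentration step. The paper treats $\|\bs\Phi\bs z\|_1 = \sum_k X_k$ as a sum of independent sub-Gaussian random variables and invokes the standard Hoeffding-type concentration for such sums (citing Vershynin and \cite{FDVJ19}), which leaves the constants $C,c$ implicit but would extend verbatim to sub-Gaussian sensing ensembles. You instead view the same quantity as a function $F(\bs a,\bs b)=\sum_k\sqrt{a_k^2+b_k^2}$ of $2m$ standard Gaussians, bound its Lipschitz constant by $\sqrt m$ via the reverse triangle inequality and Cauchy--Schwarz, and apply the Gaussian concentration inequality of Lemma~\ref{lem:LD-concent} --- a lemma the paper states anyway for use elsewhere. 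Your route is marginally more self-contained within this paper and yields explicit constants ($C=2$, $c=\pi/4$, since $\kappa^2/2 = \pi/4$), at the cost of being tied to the Gaussian case, where Lipschitz concentration holds; both arguments are equally rigorous and yield the same $\exp(-c\delta^2 m)$ rate.
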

\begin{proof}
  By homogeneity of \eqref{eq:concentration-L1Phi-norm}, we can assume that $\|\bs z\|=1$. In this case, $\bs \Phi \bs z \sim \bs g = \bs g^\Re + \im \bs g^\Im$, with  $\bs g^\Re, \bs g^\Im \sim_{\iid} \cl N_{\bb C}^{m}(0, 1)$. In this context, $\inv{\kappa m} \|\bs \Phi \bs z\|_1 \sim \inv{\kappa m}\|\bs g^\Re + \im \bs g^\Im\|_1 = \sum_k X_k$, where each independent random variable $X_k := \big ( (g_k^\Re)^2 + (g_k^\Im)^2\big)^{1/2}$ follows a Rayleigh distribution with unit scale parameter~\cite{papoulis2002probability}, which gives $\bb E X_k = \sqrt{\pi/2} = \kappa$. Therefore, since each $X_k$ is a sub-Gaussian random variable~\cite{Ver12}, the concentration property of the sum of $m$ such random variables provides~\cite{Ver12,FDVJ19}, for $t>0$, 
$$
\ts \bb P\big[\,\big|\,\|\bs g^\Re + \im \bs g^\Im\|_1 - \kappa m\big| = \big|\sum_k X_k - \bb E X_k\big| > t \kappa\,\big]\ \leq\ C \exp(-c \frac{t^2}{m}).
$$
The result follows from a simple change of variable.  
\end{proof}

We now show that the Gaussian mean width of a bounded set projected on a subspace is bounded by twice the width of the original set.
\begin{lemma}
\label{lem:gmw-project-space}
For a bounded subset $\cl K' \subset \bb R^n$, and a $L$-dimensional subspace $\cl L \subset \bb R^n$ related to the projecting matrix $\bs P \in \bb R^{L \times n}$, we have
\begin{equation}
  \label{eq:gmw-proj-bound}
w(\bs P (\cl K')) \leq 2 w(\cl K').  
\end{equation}
\end{lemma}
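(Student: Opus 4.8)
The plan is to unfold the definition of Gaussian mean width and to transport the projection off the set and onto the Gaussian vector. Writing $\bs g \sim \cl N^n(0,1)$ and $\bs h \sim \cl N^L(0,1)$, I start from
$$
w(\bs P (\cl K')) = \bb E \sup_{\bs u \in \cl K'} \scp{\bs h}{\bs P \bs u} = \bb E \sup_{\bs u \in \cl K'} \scp{\bs P^\top \bs h}{\bs u}.
$$
Taking $\bs P$ to be, as is standard for a projecting matrix, the matrix whose rows form an orthonormal basis of $\cl L$ (so that $\bs P \bs P^\top = \Id_L$ and $\bs Q := \bs P^\top \bs P$ is the orthogonal projector onto $\cl L$), the vector $\bs P^\top \bs h$ is centered Gaussian with covariance $\bs Q$, hence $\bs P^\top \bs h \sim \bs Q \bs g$. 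This reduces the claim to bounding $\bb E \sup_{\bs u \in \cl K'} \scp{\bs Q \bs g}{\bs u}$ by $2\, w(\cl K')$, i.e.\ to comparing the width seen through $\bs Q$ with the full width.

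Next I would split $\bs Q \bs g = \bs g - (\Id_n - \bs Q)\bs g$ and use the pointwise inequality $\sup(a-b) \le \sup a + \sup(-b)$, which gives
$$
w(\bs P (\cl K')) \le \bb E \sup_{\bs u \in \cl K'} \scp{\bs g}{\bs u} + \bb E \sup_{\bs u \in \cl K'} \scp{-(\Id_n - \bs Q)\bs g}{\bs u}.
$$
The first term is exactly $w(\cl K')$. For the second, the Gaussian symmetry $-\bs g \sim \bs g$ lets me drop the sign and replace $-(\Id_n-\bs Q)\bs g$ by $\bs R \bs g$ with $\bs R := \Id_n - \bs Q$, again an orthogonal projector. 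Everything then comes down to the monotonicity statement $\bb E \sup_{\bs u \in \cl K'}\scp{\bs R \bs g}{\bs u} \le w(\cl K')$.

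This monotonicity is the heart of the matter and the step I expect to be the main obstacle. I would prove it by decomposing $\bs g = \bs R \bs g + (\Id_n - \bs R)\bs g$ into its two components, which are independent because they are jointly Gaussian and uncorrelated ($\bs R(\Id_n-\bs R)=\bs 0$, as $\bs R$ is symmetric and idempotent). Conditioning on $\bs R \bs g$ and applying Jensen's inequality (the supremum of an expectation is at most the expectation of the supremum) yields
$$
w(\cl K') = \bb E \sup_{\bs u}\big(\scp{\bs R\bs g}{\bs u} + \scp{(\Id_n-\bs R)\bs g}{\bs u}\big) \ge \bb E \sup_{\bs u}\scp{\bs R\bs g}{\bs u},
$$
since the inner conditional expectation of $\scp{(\Id_n-\bs R)\bs g}{\bs u}$ vanishes. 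Combining the two terms then gives $w(\bs P (\cl K')) \le 2\, w(\cl K')$, as desired.

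Finally, I would note that the same computation, carried out by averaging the two independent standard Gaussians $\bs Q\bs g_1 \pm (\Id_n-\bs Q)\bs g_2$ (which both reconstruct $\bs Q\bs g_1$ as a half-sum) rather than via the triangle inequality, in fact delivers the sharper bound $w(\bs P (\cl K')) \le w(\cl K')$. The factor $2$ in the statement is thus a convenient, non-tight estimate, which is all that is needed later.
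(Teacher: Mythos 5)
Your proof is correct, and its engine is genuinely different from the paper's. Both arguments start the same way, pulling the low-dimensional Gaussian back through the projector to write $w(\bs P\cl K') = \bb E\sup_{\bs u\in\cl K'}\scp{\bs Q\bs g}{\bs u}$ with $\bs Q := \bs P^\top\bs P$ and $\bs g\sim\cl N^n(0,1)$, but they then diverge. The paper symmetrizes: it inserts the orthogonal component with both signs, bounds the maximum over signs by the sum, and uses that $\pm(\bs P^\perp)^\top\bs P^\perp\bs g + \bs P^\top\bs P\bs g\sim\cl N^n(0,1)$, collecting $2\,w(\cl K')$. You instead split $\bs Q\bs g = \bs g - (\Id_n-\bs Q)\bs g$ by sublinearity of the supremum and dispose of the leftover term with a separate monotonicity lemma, $\bb E\sup_{\bs u\in\cl K'}\scp{\bs R\bs g}{\bs u}\le w(\cl K')$ for the orthogonal projector $\bs R := \Id_n-\bs Q$, proved by independence of $\bs R\bs g$ and $(\Id_n-\bs R)\bs g$, conditioning, and Jensen (supremum of a conditional expectation $\le$ conditional expectation of the supremum). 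That conditioning ingredient is absent from the paper, and it buys you two things. First, your argument is valid for an arbitrary bounded $\cl K'$, exactly as the lemma is stated: the paper's step bounding the max over signs by the sum over signs is only legitimate when each $\sup_{\bs u}\scp{\cdot}{\bs u}$ is nonnegative (e.g., when $\cl K'$ is symmetric or contains the origin), which does hold in the paper's applications to $\cl K\cap\bb S^{n-1}$ with $\cl K$ a symmetric cone, but not for every bounded set. Second, your closing remark is accurate: the monotonicity applied directly to $\bs Q$ (equivalently, your half-sum averaging of the two standard Gaussians $\bs Q\bs g_1\pm(\Id_n-\bs Q)\bs g_2$) already yields the sharper bound $w(\bs P\cl K')\le w(\cl K')$, so both the factor $2$ and the triangle-inequality detour are dispensable; the constant $2$ of the statement is simply what the paper's symmetrization produces and is all that is needed downstream.
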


\begin{proof}
  Let us define $\cl L^\perp$ and $\bs P^\perp$ the subspace orthogonal to $\cl L$ and its $(n-L) \times n$ projection matrix, respectively, so that $(\bs P^\perp)^\top \bs P^\perp + \bs P^\top \bs P = \Id_n$. Then, for $\bs g \sim \cl N^{L}(0,1)$, which can always be written as $\bs g = \bs P \bs g'$ for some $\bs g' \sim \cl N^{n}(0,1)$ (by marginalization of the Gaussian distribution), we have 
  \begin{align*}
    w(\bs P \cl K')&\ts = \bb E \sup_{\bs u \in \cl K'} \scp{\bs g}{\bs P \bs u} = \bb E \sup_{\bs u \in \cl K'} \scp{\bs P\bs g'}{\bs P \bs u} = \bb E \sup_{\bs u \in \cl K'} \scp{\bs P^\top \bs P\bs g'}{\bs u}\\
                    &\ts \leq \bb E \sup_{\bs u \in \cl K'} \max_{\sigma \in \{\pm 1\}} \sigma \scp{(\bs P^\perp)^\top \bs P^\perp \bs g'}{\bs u} + \scp{\bs P^\top \bs P\bs g'}{\bs u}\\
                    &\ts \leq \bb E \sup_{\bs u \in \cl K'} \sum_{\sigma \in \{\pm 1\}} \scp{\sigma   (\bs P^\perp)^\top \bs P^\perp \bs g' + \bs P^\top \bs P\bs g'}{\bs u}\\
                    &\ts \leq \sum_{\sigma \in \{\pm 1\}} \bb E \sup_{\bs u \in \cl K'} \scp{\sigma   (\bs P^\perp)^\top \bs P^\perp \bs g' + \bs P^\top \bs P\bs g'}{\bs u}\ = 2 w(\cl K'),
  \end{align*}
  where the last line used the fact that $\pm (\bs P^\perp)^\top \bs P^\perp \bs g' + \bs P^\top \bs P\bs g' \sim \cl N^n(0,1)$.
\end{proof}

Finally, this fourth lemma introduces a complex, local sign-product embedding~\cite{plan2012robust,jacques2013quantized}; it shows that projecting the random measurements of any low-complexity vector onto the phase-only measurements of another fixed vector is close, up to a rescaling, to the scalar product between both vectors.
\begin{lemma}
  \label{lem:sign-product-embedding}
  Given a symmetric cone $\cl K \subset \bb R^n$, $\delta > 0$, and $\bs z \in \bb S^{n-1}$, if
  \begin{equation}
    \label{eq:sign-product-embedding}
\ts  m \geq C \delta^{-2} w^2(\cl K \cap \bb S^{n-1}),   
  \end{equation}
then, with probability exceeding $1 - C \exp(-c \delta^2 m)$, the random matrix $\bs \Phi \sim \cl N_{\bb C}^{m \times n}(0,2)$ satisfies
  \begin{equation}
    \label{eq:srp-def}
    \ts |\inv{\kappa m} \scp{\signc(\bs \Phi \bs z)}{\bs \Phi \bs u}  - \scp{\bs z}{\bs u}| \leq \delta \|\bs u\|, \quad \forall \bs u \in \cl K. 
  \end{equation}
\end{lemma}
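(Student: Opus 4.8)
The plan is to exploit the independence between the observed phases $\signc(\bs\Phi\bs z)$ and the action of $\bs\Phi$ on the orthogonal complement of $\bs z$, which reduces the only nontrivial piece to a \emph{conditionally Gaussian} process whose width is controlled by Lemma~\ref{lem:gmw-project-space}. Since the left-hand quantity in \eqref{eq:srp-def} is homogeneous of degree one in $\bs u$ (both $\inv{\kappa m}\scp{\signc(\bs\Phi\bs z)}{\bs\Phi\bs u}$ and $\scp{\bs z}{\bs u}$ are linear in $\bs u$, and $\cl K$ is a cone), it suffices to prove the bound uniformly over $T:=\cl K\cap\bb S^{n-1}$. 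First I would split $\bs u = \scp{\bs z}{\bs u}\,\bs z + \bs u^\perp$ with $\bs u^\perp\perp\bs z$, so that, using $\scp{\signc(\bs\Phi\bs z)}{\bs\Phi\bs z}=\|\bs\Phi\bs z\|_1$,
\[
\ts \inv{\kappa m}\scp{\signc(\bs\Phi\bs z)}{\bs\Phi\bs u} - \scp{\bs z}{\bs u} = \scp{\bs z}{\bs u}\big(\inv{\kappa m}\|\bs\Phi\bs z\|_1 - 1\big) + g(\bs u),\quad g(\bs u):=\inv{\kappa m}\scp{\signc(\bs\Phi\bs z)}{\bs\Phi\bs u^\perp}.
\]
The diagonal term is immediate: as $\|\bs z\|=1$ we have $|\scp{\bs z}{\bs u}|\le 1$, and Lemma~\ref{lem:concentration-L1Phi-norm} gives $|\inv{\kappa m}\|\bs\Phi\bs z\|_1 - 1|\le\delta$ with probability $1-C\exp(-c\delta^2 m)$, so this term is at most $\delta$ uniformly over $T$.

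The heart of the argument is the uniform control of the cross term $g$. Because $\bs z\perp\bs u^\perp$, the vector $\bs\Phi\bs z$ and the restriction of $\bs\Phi$ to $\bs z^\perp$ are jointly Gaussian and uncorrelated, hence independent; in particular $\signc(\bs\Phi\bs z)$ is independent of $\bs\Phi\bs u^\perp$. I would therefore condition on the phase vector $\bs\phi:=\signc(\bs\Phi\bs z)$ (with $|\phi_k|=1$) and regard $g$ as a function of $\bs\Psi$, the $m\times(n-1)$ complex Gaussian matrix representing $\bs\Phi$ in an orthonormal basis of $\bs z^\perp$, which is independent of $\bs\phi$. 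Writing out real and imaginary parts then shows that, for any fixed $\bs\phi$, both $\Re g(\bs u)$ and $\Im g(\bs u)$ equal $\inv{\kappa\sqrt m}\scp{\bs g}{\bs w}$ in distribution, where $\bs w$ collects the coordinates of $\bs u^\perp=\bs P\bs u$ ($\bs P$ the projector onto $\bs z^\perp$) and $\bs g\sim\cl N^{n-1}(0,1)$; crucially this law does \emph{not} depend on $\bs\phi$, since each coefficient has variance $\sum_k|\phi_k|^2 = m$. These are genuine Gaussian processes, so by Lemma~\ref{lem:gmw-project-space},
\[
\ts \Ex\sup_{\bs u\in T}\Re g(\bs u) = \inv{\kappa\sqrt m}\,w(\bs P(T)) \le \tfrac{2}{\kappa\sqrt m}\,w(T),
\]
and likewise for $\Im g$, using that $T$ (hence $\bs P(T)$) is symmetric so that suprema and absolute suprema coincide.

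To pass from the expected supremum to a uniform bound I would apply the Gaussian concentration of Lemma~\ref{lem:LD-concent} to the map $\bs\Psi\mapsto\sup_{\bs u\in T}\Re g(\bs u)$. As $\Re g$ is linear in $\bs\Psi$ with gradient of norm $\tfrac{\|\bs w\|}{\kappa\sqrt m}\le\tfrac{1}{\kappa\sqrt m}$ over $T$, this supremum is Lipschitz with constant $\lambda\le\tfrac{1}{\kappa\sqrt m}$, hence deviates from its mean by more than $\delta$ with probability at most $2\exp(-c\delta^2 m)$. Under the sample complexity \eqref{eq:sign-product-embedding} one has $\tfrac{1}{\sqrt m}w(T)\lesssim\delta$, so each of $\sup\Re g$ and $\sup|\Im g|$ is $O(\delta)$ with the stated probability; since the conditional law is $\bs\phi$-independent these bounds hold unconditionally. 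A union bound over the diagonal event and the two supremum events, together with $|g|\le|\Re g|+|\Im g|$ and a rescaling of the absolute constants to absorb the $O(1)$ factors, then yields \eqref{eq:srp-def}.

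I expect the main obstacle to be exactly this cross term. The delicate point is to set up the conditioning so that, \emph{after} fixing the phases, the residual $\scp{\signc(\bs\Phi\bs z)}{\bs\Phi\bs u^\perp}$ is precisely Gaussian with a law \emph{independent} of $\bs\phi$ — this is what makes the high-probability bound transfer back to the unconditional setting — and then to invoke Lemma~\ref{lem:gmw-project-space} to trade the width of the projected set $\bs P(T)$ for $w(T)$. Everything else (the diagonal term, the Lipschitz concentration, the homogeneity reduction) is routine once this step is in place.
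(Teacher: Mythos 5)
Your proposal is correct and follows essentially the same route as the paper's own proof: the same split of $\bs u$ into its components along and orthogonal to $\bs z$, control of the diagonal term via Lemma~\ref{lem:concentration-L1Phi-norm}, and for the cross term the same key step of exploiting independence between $\signc(\bs \Phi \bs z)$ and the action of $\bs \Phi$ on $\bs z^\perp$, so that after conditioning the process is Gaussian with a law independent of the phases (since $\|\signc(\bs \Phi \bs z)\| = \sqrt m$), followed by Lemma~\ref{lem:gmw-project-space} for the projected width and the Lipschitz concentration of Lemma~\ref{lem:LD-concent}. The only cosmetic differences are that the paper realizes the independence through an explicit rotation sending $\bs z$ to $\bs e_1$ and conditions on $\bs g = \bs \Phi \bs z$ rather than on its phases, and bounds $|g|$ by a max of real and imaginary parts rather than their sum.
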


\begin{proof}
Since \eqref{eq:srp-def} is homogeneous in $\bs u$ and $\cl K$ is conic, we can assume $\|\bs u\|=1$. Given $\cl K^* = \cl K \cap \bb S^{n-1}$, we must prove that $\bb P(E > \delta) \leq C \exp(-c \delta^2 m)$ with 
$$
\ts E := \sup_{\bs u \in \cl K^*} |\inv{\kappa m} \scp{\signc(\bs \Phi \bs z)}{\bs \Phi \bs u} - \scp{\bs z}{\bs u}|.
$$
Using the decomposition $\bs u = \bs u^{\parallel} + \bs u^\perp$ with $\bs u^{\parallel} := \scp{\bs z}{\bs u} \bs z$ and $\bs u^\perp := (\bs u - \scp{\bs z}{\bs u} \bs z)$ (with $\scp{\bs u^{\parallel}}{\bs u^\perp}=0$), and the triangular inequality, we find 
\begin{align*}
E\ &\ \ts \leq \sup_{\bs u \in \cl K^*} |\scp{\bs z}{\bs u}|\,\big |\inv{\kappa m}\|\bs \Phi \bs z\|_1 - 1\big|\, +\, \inv{\kappa m} \sup_{\bs u \in \cl K^*} |\scp{\signc(\bs \Phi \bs z)}{\bs \Phi \bs u^\perp}|\\  
                     &\ \leq \underbrace{\ts \big|\inv{\kappa m}\|\bs \Phi \bs z\|_1 - 1\big|}_{A(\bs \Phi)} + \underbrace{\ts \inv{\kappa m} \sup_{\bs u \in \cl K^*} |\scp{\signc(\bs \Phi \bs z)}{\bs \Phi \bs u^\perp}|}_{B(\bs \Phi)}.
\end{align*}

Moreover, since $\bb P( E > \delta) \leq \bb P( 2\max(A,B) > \delta) \leq \bb P(A > \delta/2) + \bb P(B > \delta/2)$,
and $\bb P(A > \delta/2) \leq C \exp(-c \delta^2 m)$ from Lemma~\ref{lem:concentration-L1Phi-norm}, we only have to prove that $\bb P(B > \delta/2) \leq C \exp(-c \delta^2 m)$ if \eqref{eq:sign-product-embedding} holds.

From the rotational invariance of the Gaussian distribution, $\bs \Phi$ and $\bs \Phi \bs R$ have the same distribution for any rotation matrix $\bs R \in \bb R^{n \times n}$, and $\bb P(B(\bs \Phi) > \delta/2) = \bb P(B(\bs \Phi\bs R) > \delta/2)$. Since $\bs z \in \bb S^{n-1}$ is fixed, we decide to set $\bs R$ such that $\bs R \bs z = \bs e_1$ and $(\bs R \bs v)_1 = 0$ for all $\bs v \in (\bb R\bs z)^\perp := \{\bs u \in \bb R^n: \scp{\bs u}{\bs z} = 0\}$. Then, defining the restriction matrix $\bs S \in \bb R^{(n-1) \times n}$ such that $\bs S \bs w = (w_2,\, \cdots, w_n)^\top \in \bb R^{n-1}$ for any $\bs w \in \bb R^n$, we observe that, for all $\bs v \in (\bb R\bs z)^\perp$, $\bs \Phi \bs R \bs v = \bs G \bs S \bs R \bs v$ with $\bs G = \bs \Phi \bs S^\top \sim \cl N_{\bb C}^{m\times (n-1)}(0,2)$ independent of $\bs g := \bs \Phi \bs R \bs z = \bs \Phi \bs e_1 \sim \cl N_{\bb C}^{m}(0,2)$. 

Therefore, since $\bs u^\perp \in (\bb R\bs z)^\perp$ and $\bs S \bs R \bs u^\perp = \bs S \bs R \bs u$ by design of $\bs S$ and $\bs R$, we can write 
\begin{equation}
\label{eq:equiv-spe-dist}
  \ts |\scp{\signc(\bs \Phi \bs R \bs z)}{\bs \Phi \bs R \bs u^\perp}| = |\scp{\signc(\bs g)}{\bs G \bs S \bs R \bs u^\perp}| = |\scp{\signc(\bs g)}{\bs G \bs S \bs R \bs u}|,
\end{equation}
From the independence of $\bs g$ and $\bs G$, the equivalence \eqref{eq:equiv-spe-dist} allows us to condition the random variable $B(\bs \Phi \bs R)$ to the value of~$\bs g$ while preserving the distribution of $\bs G$. We thus focus on bounding  $P\big(B(\bs \Phi \bs R) > {\delta}/{2}\ |\,\bs g\big)$ and eventually use $\bb P\big(B(\bs \Phi \bs R) > {\delta}/{2}\big) = \bb E \bb P\big(B(\bs \Phi \bs R) > {\delta}/{2}\ |\,\bs g\big)$ by expectation over $\bs g$.

For $\bs g$ fixed, $|\scp{\signc(\bs g)}{\bs G \bs S \bs R \bs u}| = |\scp{\bs G^*\signc(\bs g)}{\bs S \bs R \bs u}|$ is distributed as $\sqrt m\,|\scp{\bs \gamma}{\bs S \bs R \bs u}|$ with $\bs \gamma \sim \cl N^{n-1}_{\bb C}(0,2)$, since $\|\signc(\bs g)\| = \sqrt m$. Therefore, using $|\scp{\bs a}{\bs b}| \leq 2 \max(|\scp{\bs a^\Re}{\bs b}|, |\scp{\bs a^\Im}{\bs b}|$ for any $\bs a \in \bb C^d$ and $\bs b \in \bb R^d$, we find 
\begin{align*}
  \ts \bb P\big(B(\bs \Phi \bs R) > \frac{\delta}{2}\ |\,\bs g\big)&\ts = \bb P\big( \frac{1}{\kappa \sqrt m} \sup_{\bs u \in \cl K^*} |\scp{\bs \gamma}{\bs S \bs R \bs u}|\, \geq\, \frac{\delta}{2}\ |\,\bs g\big)\\
  &\ts \leq \bb P\Big(\frac{1}{\kappa \sqrt m} \sup_{\bs u \in \cl K^*} \max\big(\,|\scp{\bs \gamma^\Re}{\bs S \bs R \bs u}|,\, |\scp{\bs \gamma^\Im}{\bs S \bs R \bs u}|\big) \, \geq\,  \frac{\delta}{4}\ |\,\bs g\Big)\\
  &\ts \leq 2 \bb P\big( F(\bs g')\, \geq\,  \frac{\delta}{4}\ |\,\bs g\big),
\end{align*}
where we defined $F: \bs v \in \bb R^{n-1} \to F(\bs v) := \sup_{\bs u \in \cl K^*} \,|\scp{\bs v}{\bs S \bs R \bs u}|/(\kappa \sqrt m)$, and we used the union bound and the fact that $\bs \gamma^\Re, \bs \gamma^\Im \sim_\iid \bs g' \sim \cl N^{n-1}(0,1)$.

Let us characterize the random variable $F(\bs g')$.  We first observe that
\begin{equation*}
  \ts \bb E F(\bs g') = \frac{1}{\kappa \sqrt m} \bb E \sup_{\bs u \in \cl K^*} \,|\scp{\bs g'}{\bs S \bs R \bs u}| = \frac{1}{\kappa \sqrt m} \bb E \sup_{\bs u \in \cl K^*} \,\scp{\bs g'}{\bs S \bs R \bs u} = \frac{1}{\kappa \sqrt m} w(\cl K'),  
\end{equation*}
where $\cl K' := (\bs S \bs R\,\cl K^*)$, the width $w$ is defined in \eqref{eq:GMW-def}, and we used the symmetry of $\cl K$.

Since $\bs S$ is a projector with $\bs S^\top \bs S + \bs e_1 \bs e_1^\top = \Id_n$, Lemma~\ref{lem:gmw-project-space} tells us that
\begin{equation}
  \label{eq:tmp1}
  \ts \kappa \sqrt m\, \bb E F(\bs g') \leq w(\cl K') \leq 2 w(\bs R \cl K^*) = 2 w(\cl K^*),  
\end{equation}
since the Gaussian mean width is invariant under rotation.

Moreover, $F$ is Lipschitz with constant $\lambda = 1/(\kappa \sqrt m)$ since $\|\bs S \bs R \bs u\| \leq 1$ for all $\bs u \in \cl K^*$,
$$
\ts |F(\bs v) - F(\bs v')| \leq \frac{1}{\kappa \sqrt m} \sup_{\bs u \in \cl K^*} \,|\scp{\bs v - \bs v'}{\bs S \bs R \bs u}| \leq \frac{1}{\kappa \sqrt m} \|\bs v - \bs v'\|.
$$

We can thus invoke Lemma~\ref{lem:LD-concent} on $F$ and $\bs g'$ to conclude that, from \eqref{eq:tmp1} and for $r > 0$,
\begin{align*}
  &\ts \bb P \big( \frac{1}{\kappa \sqrt m} \bb E \sup_{\bs u \in \cl K^*} \,|\scp{\bs g'}{\bs S \bs R \bs u}|  \geq \frac{2}{\kappa \sqrt m} w(\cl K^*) + r\ |\,\bs g\big)\\
  &\ts \leq \bb P \big( \frac{1}{\kappa \sqrt m} \bb E \sup_{\bs u \in \cl K} \,|\scp{\bs g'}{\bs S \bs R \bs u}|  \geq \bb E F(\bs g')+ r\ |\,\bs g\big) = \bb P \big( F(\bs g')  \geq \bb E F(\bs g')+ r\ |\,\bs g\big)\\
  &\ts \leq \bb P \big( \big| F(\bs g') -\bb E F(\bs g') \big| > r\ |\,\bs g\big) \leq 2 \exp(-\frac{1}{2} r^2 \kappa^2 m).  
\end{align*}
Taking, \eg $r=\delta /8$ and $m \geq 64 \,\delta^{-2} \kappa^{-2} w(\cl K^*)$, gives $r + \frac{2}{\kappa \sqrt m} w(\cl K^*) \leq \delta /4$ so that, by expectation over $\bs g$, the proof is concluded from
$$
\ts \bb P(B > \frac{\delta}{2} ) = \bb E\,\bb P(B > \frac{\delta}{2}\,|\,\bs g) \leq 2 \bb E\,\bb P \big( \frac{1}{\kappa \sqrt m} \bb E \sup_{\bs u \in \cl K^*} \,|\scp{\bs g'}{\bs S \bs R \bs u}|  \geq \frac{\delta}{4}\ |\bs g\big) \leq C \exp(-c \delta^2 m).
$$
\end{proof}

\subsection{Proof of Thm~\ref{thm:rip-for-Ax}}
\label{sec:proof-rip-Ax}

Given a symmetric cone $\cl K \in \bb R^n$, $\bs x \in \bb R^n$ (with $\bar{\bs x} := \bs x / \|\bs x\|$), $\bs A = \bs \Phi/\sqrt m$ with $\bs \Phi \sim \cl N_{\bb C}^{m \times n}(0,2)$, and $\bs z = \signc(\bs A \bs x)$, we can now determine under which conditions and with which probability the matrix $\bs A_{\bs z}$ defined in \eqref{eq:equiv-cs-model} satisfies the RIP over $\cl K$ for some distortion $\delta>0$. This amounts to showing that, under the condition \eqref{eq:sample-complexity} and with probability exceeding $1 - C \exp(-c\delta^2 m)$,
$$
\ts \big| \|\bs A_{\bs z} \bs u\|^2 - 1 \big| \leq \delta,\quad \forall \bs u \in \cl K^* := \cl K \cap \bb S^{n-1},
$$
where we can assume $\|\bs u\| = 1$ since $\cl K$ is a cone.
\medskip

Given the definition of $\bs \alpha_{\bs z}$ and $\bs H_{\bs z}$ in \eqref{eq:equiv-cs-model}, for $\bs u \in \cl K^*$, we see that
\begin{align}
  \ts \|\bs A_{\bs z} \bs u\|^2&\ts =  \big\| \big[ \bs \alpha_{\bs z}^\Re, \bs \alpha_{\bs z}^\Im \big]^\top \bs u \big\|^2 + \|\bs H_{\bs z} \bs u\|^2 = |\scp{\bs \alpha_{\bs z}}{\bs u}|^2 + \|\bs H_{\bs z} \bs u\|^2\nonumber\\
  \label{eq:devel-norm-Axu}
  &\ts = \inv{\kappa^2 m^2} |\scp{\signc(\bs \Phi \bs x)}{\bs\Phi\bs u}|^2 + \|\bs H_{\bs z} \bs u\|^2.
\end{align}

We first bound the first term of \eqref{eq:devel-norm-Axu}. Under the condition $m \geq C (1+\delta^{-2}) w^2(\cl K \cap \bb S^{n-1})$, which is involved by \eqref{eq:sample-complexity}, Lemma~\ref{lem:sign-product-embedding} and Thm~\ref{thm:gen-rip} inform us that the joint event 
\begin{align*}
  \ts \cl E:\quad  \ |\inv{\kappa m} \scp{\signc(\bs \Phi \bs x)}{\bs\Phi\bs v} - \scp{\bar{\bs x}}{\bs v}| \leq \frac{3}{8}\delta\quad \text{and}\quad  |\inv{m} \|\bs \Phi \bs u\|^2 - 1| \leq \kappa^2 -1,\ \forall \bs v \in \cl K^*,
\end{align*}
holds (by union bound) with probability exceeding $1 - C \exp(-c \delta^2 m)$. Since for any $A, B \in \bb C$, $| |A|^2 - |B|^2| \leq |A +B||A-B|$, $\cl E$ involves
\begin{equation}
  \label{eq:first-ingredient}
  \ts |\inv{\kappa^2 m^2} |\scp{\signc(\bs \Phi \bs x)}{\bs\Phi\bs u}|^2 - |\scp{\bar{\bs x}}{\bs u}|^2|
  \leq \frac{3\delta}{8} (1 + |\inv{\kappa m} \scp{\signc(\bs \Phi \bs x)}{\bs\Phi\bs u}|) \leq \frac{3\delta}{8} (1 + \frac{1}{\kappa \sqrt m} \|\bs \Phi \bs u\|)  \leq \frac{3}{4} \delta.  
\end{equation}

We now focus on bounding the second term of \eqref{eq:devel-norm-Axu}.  We first note that for any $\bs u \in \cl K^*$ decomposed as $\bs u = \bs u^\parallel + \bs u^\perp$ with $\bs u^\parallel = \scp{\bs u}{\bar{\bs x}}  \bar{\bs x}$ and $\bs u^\perp = (\bs u - \scp{\bs u}{\bar{\bs x}}  \bar{\bs x})$, we have $\bs H_{\bs z} \bs u = \bs H_{\bs z} \bs u^\perp$ since $\bs H_{\bs z} \bs x = \bs 0$. We are going to show that
\begin{equation}
  \label{eq:bound-on-px}
  \ts p_{\bs z}(\bs \Phi) := \bb P\big[\exists \bs u \in \cl K^*, \big| \|\bs H_{\bs z} \bs u^\perp\|^2 - \|\bs u^\perp\|^2 \big| > \inv{4}\delta \|\bs u^\perp\|^2 \big] \leq C \exp(-c \delta^2 m),
\end{equation}
provided \eqref{eq:sample-complexity} is satisfied.

Using the rotational invariance of the Gaussian distribution, we note that $p_{\bs z}(\bs \Phi) = p_{\bs z}(\bs \Phi \bs R)$ for any rotation matrix $\bs R \in \bb R^{n \times n}$. We proceed similarly to the proof of Lemma~\ref{lem:sign-product-embedding} and take $\bs R$ such that $\bs R \bs x = \|\bs x\| \bs e_1$ and $(\bs R \bs v)_1 = 0$ for all $\bs v \in (\bb R\bs x)^\perp$; we thus find that for all $\bs v \in (\bb R\bs x)^\perp$, $\bs \Phi \bs R \bs v = \bs G \bs S\bs R \bs v$ with $\bs G = \bs \Phi \bs S^\top \sim \cl N_{\bb C}^{m\times (n-1)}(0,2)$ independent of $\bs g := \|\bs x\|^{-1} \bs \Phi \bs R \bs x = \bs \Phi \bs e_1 \sim \cl N_{\bb C}^m(0, 2)$.

From the independence of $\bs g$ and $\bs G$, we can condition $p_{\bs z}$ to the value of $\bs g$ without altering the distribution of $\bs G$, and eventually computing this probability by expectation over $\bs g$ from $p_{\bs z}(\bs \Phi \bs R) = \bb E  p_{\bs z}(\bs \Phi \bs R|\bs g)$. In this context, defining $\bs z' := \signc(\bs A\bs R \bs x)$ and using the properties of $\bs R$, we have
$\|\bs u^\perp\|^2 = \|\bs R \bs u^\perp\|^2 = \|\bs S \bs R \bs u\|^2$. Moreover $\bs H_{\bs z'} \bs R \bs u^\perp =  \frac{1}{\sqrt m}\bs G' \bs S \bs R \bs u^\perp = \frac{1}{\sqrt m}\bs G' \bs S \bs R \bs u$ with $\bs G' := (\bs D_{\bs z'}^\Re \bs G^\Im  - \bs D_{\bs z'}^\Im \bs G^\Re) \sim \cl N^{m \times (n-1)}(0, 1)$ since $(\bs D_{\bs z'}^\Re)^2 + (\bs D_{\bs z'}^\Im)^2 = \Id_m$. Therefore
\begin{align*}
  \ts p_{\bs z}(\bs \Phi \bs R\,|\bs g)&\ts = \bb P\big[\exists \bs u \in \cl K^*, \big| \|\bs H_{\bs z'} \bs R \bs u^\perp\|^2 - \|\bs u^\perp\|^2 \big| > \inv{4}\delta \|\bs u^\perp\|^2 \,|\bs g\big]\\
                               &\ts = \bb P\big[\exists \bs u \in \cl K^*, \big| \frac{1}{m}\|\bs G' \bs S \bs R \bs u\|^2 - \|\bs S \bs R \bs u\|^2 \big| > \inv{4}\delta \|\bs S \bs R \bs u\|^2 \,|\bs g\big].
\end{align*}
By homogeneity, since $\cl K$ is a cone, we thus get 
\begin{align*}
  \ts p_{\bs z}(\bs \Phi \bs R)&\ts = \bb P\big[\exists \bs v \in \bs S \bs R(\cl K^*), \big| \frac{1}{m}\|\bs G' \bs v\|^2 - \|\bs v\|^2 \big| > \inv{4}\delta \|\bs v\|^2 \,|\bs g\big]\\
      &\ts = \bb P\big[\exists \bs v \in \bs S \bs R(\cl K), \big| \frac{1}{m}\|\bs G' \bs v\|^2 - \|\bs v\|^2 \big| > \inv{4}\delta \|\bs v\|^2 \,|\bs g\big],
\end{align*}
hence showing that $p_{\bs z}(\bs \Phi \bs R)$ is the failing probability of $\frac{1}{\sqrt m} \bs G'$ satisfying the RIP over $\cl K' = \bs S \bs R(\cl K)$ with constant $\delta/4$. We can thus invoke Thm.~\ref{thm:gen-rip} and see that \eqref{eq:bound-on-px} holds, \ie $p_{\bs z}(\bs \Phi \bs R) = \bb E p_{\bs z}(\bs \Phi \bs R\,|\bs g) \leq C \exp(-c \delta^2 m)$, provided that
$$
m \geq C \delta^{-2} w^2(\bs S \bs R(\cl K) \cap \bb S^{n-2}).
$$
For $\cl K^\perp := \{\bs u - \scp{\bar{\bs x}}{\bs u}\bar{\bs x}: \bs u \in \cl K\} \subset \cl K - \bb R \bs x$, the isotropy and monotonicity of the Gaussian mean width~\cite[Sec. 3.2]{chandrasekaran2012convex} involve   
$$
w(\bs S \bs R(\cl K) \cap \bb S^{n-2}) = w\big( (0 \oplus \bs S \bs R(\cl K)) \cap \bb S^{n-1}\big) = w( \bs R(\cl K^\perp) \cap \bb S^{n-1}) \leq w\big( (\cl K - \bb R \bs x) \cap \bb S^{n-1}\big), 
$$
where we define $0 \oplus \cl S := \{(0,v): v \in \cl S\} \subset \bb R^{n}$ for any set $\cl S \subset \bb R^{n-1}$.

We conclude that, provided $m \geq C \delta^{-2} w( (\cl K - \bb R \bs x) \cap \bb S^{n-1})^2$, which is verified from \eqref{eq:sample-complexity},  the event
\begin{equation}
  \label{eq:Eperp-def}
\ts \cl E_{\perp}:\quad \forall \bs u \in \cl K \cap \bb S^{n-1},\ \big|\inv{m} \|\bs H_{\bs z} \bs u^\perp\|^2 - \|\bs u^\perp\|^2 \big| \leq \inv{4}\delta \|\bs u^\perp\|^2  
\end{equation}
holds with probability exceeding $1 - C \exp(-c \delta^2 m)$.
\medskip

Finally, from \eqref{eq:first-ingredient} and \eqref{eq:Eperp-def}, if the joint event $\cl E \cap \cl E_{\perp}$ holds, which occurs with probability exceeding $1 - C \exp(-c \delta^2 m)$, then, for all $\bs u \in \cl K \cap \bb S^{n-1}$,  
\begin{align*}
  \ts \|\bs A_{\bs z} \bs u\|^2&\ts = \inv{\kappa^2 m^2} |\scp{\signc(\bs \Phi \bs x)}{\bs\Phi\bs u}|^2 + \inv{m}\|\bs H_{\bs z} \bs u\|^2\\
                               &\ts \leq |\scp{\bar{\bs x}}{\bs u}|^2 + \frac{3}{4}\delta + \|\bs u^\perp\|^2 + \inv{4}\delta\\
  &\ts = \|\bs u\|^2 + \delta = 1 + \delta.
\end{align*}
We find similarly that $\|\bs A_{\bs z} \bs u\|^2 \geq 1 - \delta$, which provides the final result. 

\subsection{Proof of Thm~\ref{thm:noisy-rip}}

From Thm~\ref{thm:rip-for-Ax}, we already know that, under the condition \eqref{eq:samp-complex-noisy-sensing-mtx}, $\bs A_{\bs z_0}$ with $\bs z_0 = \signc(\bs A \bs x)$ respects the RIP$(\cl K, \delta)$ with probability exceeding $1 - C \exp(-c \delta^2 m)$, which means that the event 
$$
\ts \cl E_0:\quad 1 - \delta \leq \|\bs A_{\bs z_0} \bs u\|^2 \leq 1 + \delta, \ \forall \bs u \in \cl K, 
$$
holds with that probability.

Moreover, from Thm~\ref{thm:gen-rip}, provided $m \geq C \delta^2 w(\cl K \cap \bb S^{n-1})$, which holds if \eqref{eq:samp-complex-noisy-sensing-mtx} is verified, and with a failing probability smaller than $C \exp(-c \delta^2 m)$, this other event is respected:
$$
\ts \cl E_1:\quad 1 - \delta \leq \frac{1}{m} \|\bs \Phi \bs u\|^2 \leq 1 + \delta, \ \forall \bs u \in \cl K, 
$$

Let us assume that both events hold, which under the condition \eqref{eq:samp-complex-noisy-sensing-mtx} happens with probability greater than $1 - C \exp(-c \delta^2 m)$. Then, from the definition of $\bs A_{\bs \epsilon}$, $\bs \alpha_{\bs \epsilon}$, and $\bs H_{\bs \epsilon}$, we compute that, for any $\bs u \in \cl K \cap \bb S^{n-1}$,
\begin{align*}
  \|\bs A_{\bs \epsilon} \bs u\|&\ts \leq \frac{1}{\kappa m} |\scp{\bs \epsilon}{\bs \Phi \bs u}|  + \|\bs H_{\bs \epsilon} \bs u\| = \frac{1}{\kappa m} |\scp{\bs \epsilon}{\bs \Phi \bs u}| + \frac{1}{\sqrt m} \|\Im(\bs D_{\bs \epsilon}^*\bs\Phi \bs u)\|\\
&\ts \leq \frac{1}{\kappa} (\frac{1}{\sqrt m}\|\bs \epsilon\|) (\frac{1}{\sqrt m}\|\bs \Phi \bs u\|)  + \frac{1}{\sqrt m} \|\bs D_{\bs \epsilon}^*\bs\Phi \bs u\|\\
                         &\ts \leq \frac{\tau}{\kappa\sqrt m}\,\|\bs \Phi \bs u\|  + \frac{\tau}{\sqrt m} \|\bs \Phi \bs u\|  \leq \tau \sqrt 2 (\inv{\kappa} + 1) = 2 \sqrt 2 \tau.
\end{align*}

Therefore, since $0 < \delta < 1$ and $\tau < 1/9$,
\begin{align*}
  \ts \|\bs A_{\bs z} \bs u\|^2&\ts \leq (\|\bs A_{\bs z_0} \bs u\| + \|\bs A_{\bs \epsilon} \bs u\|)^2 \leq (\sqrt{1 + \delta} + 2 \sqrt 2 \tau)^2\\
  &\ts = 1 + \delta + 8 \tau^2 + 4 \sqrt 2 \tau \sqrt{1+\delta} \leq 1 + \delta + 8\tau(1+ \tau) < 1 + \delta + 9 \tau,  
\end{align*}
and, since $\sqrt{1 - \delta} - 2 \sqrt 2 \tau > 0$ if $0<\delta + 9 \tau <1$, 
\begin{align*}
  \ts \|\bs A_{\bs z} \bs u\|^2&\ts \geq (\|\bs A_{\bs z_0} \bs u\| - \|\bs A_{\bs \epsilon} \bs u\|)^2 \geq (\sqrt{1 - \delta} - 2 \sqrt 2 \tau)^2\\
  &\ts = 1 - \delta - 4 \sqrt 2 \tau \sqrt{1-\delta} \geq 1 - \delta - 4\sqrt 2\tau > 1 - (\delta + 9 \tau),  
\end{align*}
which finally establishes that $\bs A_{\bs z}$ is RIP$(\cl K, \delta + 9\tau)$.

\section{Numerical Experiments}
\label{sec:experiments}

\begin{figure}[t]
    \centering
    \includegraphics[width=0.49\textwidth]{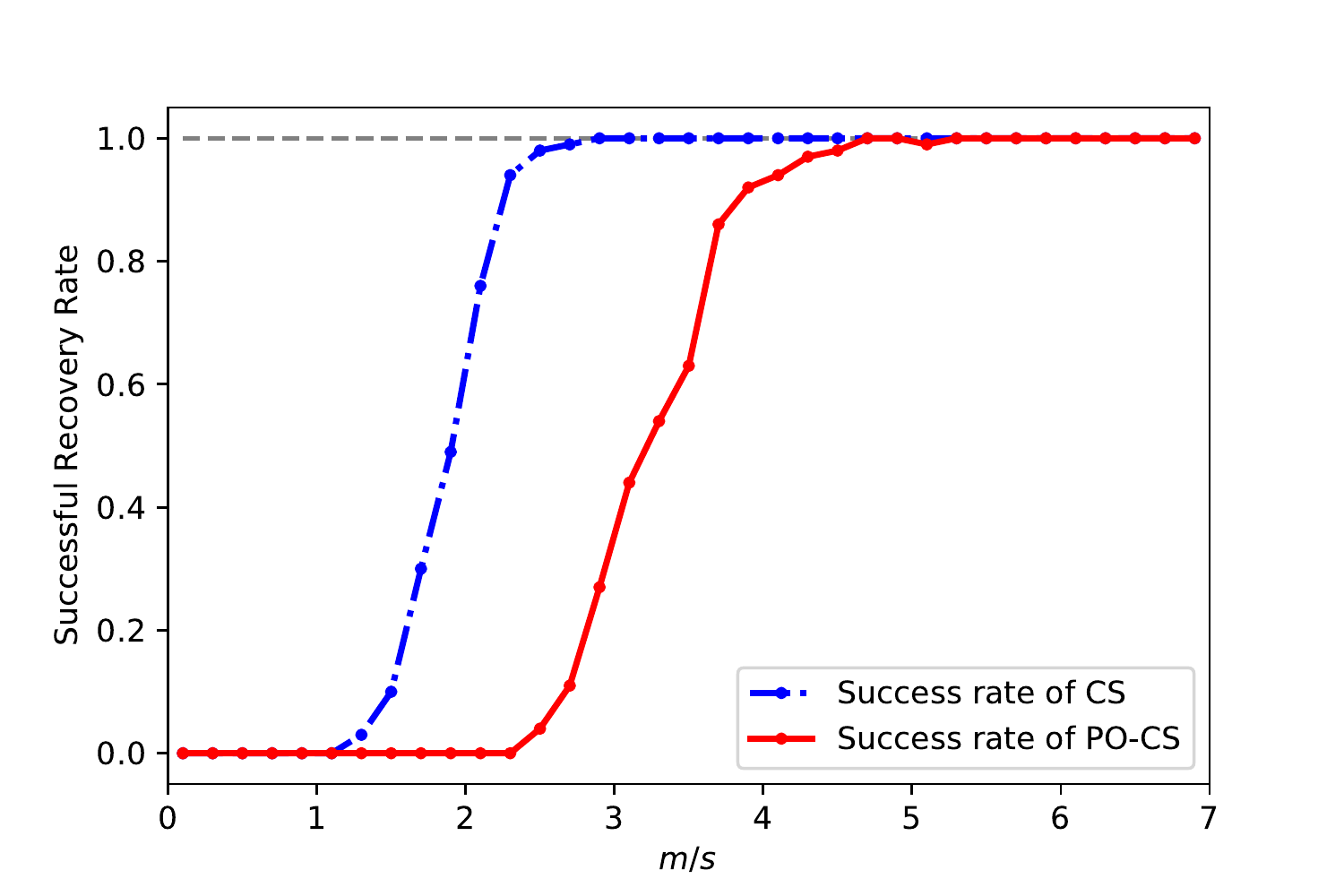}\hspace{-0.5cm}
    \includegraphics[width=0.49\textwidth]{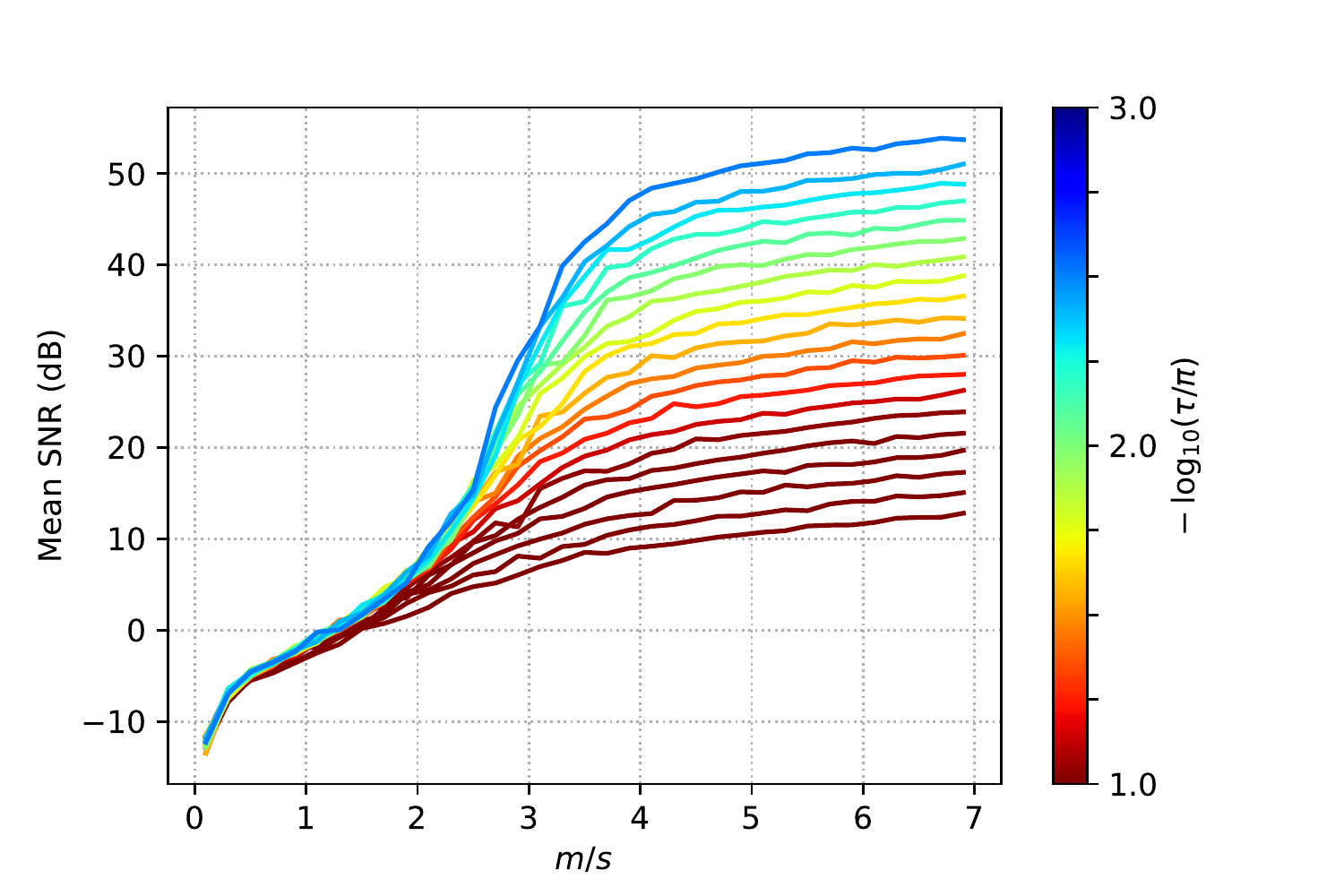}
    \caption{(left) Rate of successful recovery of $s$-sparse signals in function of $m/s$ for PO-CS (in red) under the hypothesis \eqref{eq:norm-hyp}, and in linear CS (dashed blue). (right, best viewed in color) Average SNR in function of $m/s$ for the estimation of the direction of $s$-sparse signals in PO-CS. The color coding of the SNR curves corresponds to the noise level $\tau$, as measured by $-\log_{10} \tau/\pi \in [1, 3]$. All figures were generated by Matplotlib~\cite{Hunter:2007}.}
    \label{fig:BP_rec_rate}
\end{figure}

In this section, we test the ability to recover the direction of a sparse vector from the phase of its complex compressive observations.

As a first experiment, we compare in a noiseless scenario the sample complexities of both PO-CS and (linear) CS
for sparse signal recovery. We thus study from which number of measurements a sparse vector (or its direction for PO-CS) can be perfectly recovered.  The tested experimental conditions are as follows. We have randomly generated $(s=10)$-sparse vectors in $\bb R^{(n=100)}$ for a range of measurements $m \in [1, 70]$. Each observed sparse vector $\bs x_0$ was simply generated by randomly picking a $10$-sparse support in the $100 \choose 10$ available supports, and by setting the non-zero vector components \iid as $\cl N(0,1)$. For each value of $m$, we generated the complex Gaussian random matrix as $\bs \Phi \sim \cl N_{\bb C}^{m \times n}(0,2)$. As described in Sec.~\ref{sec:sensing-model}, the $\bs x_0$ was normalized so that $\|\bs \Phi \bs x_0\|_1 = \kappa m$ with $\kappa = \sqrt{\pi/2}$. Moreover, for each $m$, the signal $\bs x_0$ and the matrix $\bs \Phi$ were randomly regenerated 100 times for reaching valid recovery statistics. 

For the reconstruction procedure, we used the basis pursuit denoising \eqref{eq:BP-def} program to solve $\Delta_{\cl K}(\bs A_{\bs z}, \bs e_1; 0)$, in comparison with the estimate produced by $\Delta_{\cl K}(\bs A, \bs y; 0)$ for the CS model $\bs y = \bs \Phi \bs x_0$. Numerically, \ref{eq:BP-def} was solved by using the Python implementation of the SPGL1 toolbox\footnote{\url{https://github.com/drrelyea/spgl1}}~\cite{van_den_Berg_2009}, together with the Numpy module~\cite{van2011numpy}. In PO-CS and in CS, a reconstruction was considered as successful when the estimate $\hat{\bs x}$ reached a relative error $\|\bs x_0 - \hat{\bs x}\|/\|\bs x_0\| \leq 10^{-3}$ (\ie a 60\,dB SNR). 

Fig.~\ref{fig:BP_rec_rate}(left) displays the success rate of both approaches, as measured from the fraction of successful reconstructions over the 100 trials in function of $m$. We observe that, as for the linear case, there exists a minimal measurement number from which the signal direction $\bs x_0/\|\bs x_0\|$ is perfectly estimated, roughly from $m/s \geq 5$, with a success-failure transition around $m/s \simeq 3.4$. This is larger than the perfect reconstruction rate reached by CS from $m/s \geq 2.5$ with a transition point at $m/s \simeq 1.8$. However, this increase is meaningful. The CS reconstruction is achieved from $m$ complex measurements, equivalent to $2m$ real observations of $\bs x_0$, while the constraints brought by the model \eqref{eq:equiv-cs-model} correspond to a measurement model containing $m+2$ real observations (by considering the first two rows of $\bs A_{\bs z}$ associated with the normalization constraints). Therefore, $m$ observations have been lost from the phase-only sensing model. We can thus expect a ratio of (at least) 2 between the sample complexities of PO-CS and CS. This ratio could possibly be reduced by exploiting the unused the last constraint of \eqref{eq:consistency-real-imag}.      

In a second experiment, we have measured the robustness of PO-CS to the addition of a noise with bounded complex amplitude, as developed in Sec.~\ref{sec:noisy-phase-sensing}. We have considered the same experimental setup as above, corrupting the noisy PO-CS model \eqref{eq:PO-CS-noisy} with a uniform additive noise $\bs \epsilon \in \bb C^m$ with $\epsilon_k \sim_\iid \cl U( \tau \cl B)$ for $k \in [m]$, $\cl B = \{\lambda \in \bb C: |\lambda| \leq 1\}$. Since Rem.~\ref{rem:cond-tau} shows the existence of an additive noise with level greater than $\pi$ for which reconstructing the signal direction is impossible, we have set the noise level of this experiment to $\tau = \pi / 10^\alpha$ for $\alpha$ evenly sampled in $[1,3]$. The reconstruction procedure was performed using the basis pursuit denoising program $\Delta_{\cl K}(\bs A_{\bs z}, \bs e_1; \varepsilon)$ for $\varepsilon$ set\footnote{We postpone to a future work the definition of an accurate estimator of this quantity.} to the oracle value $\varepsilon = \|\bs A_{\bs \epsilon} \bs x\|$, and we solved \ref{eq:BP-def} with the SPGL1 python toolbox. As above, the results have been averaged over 100 trials for each association of the parameters $m$ and $\tau$, the quantities $\bs \Phi$, $\bs x$, and $\bs \epsilon$ being randomly regenerated at each trial.

In Fig.~\ref{fig:BP_rec_rate}(right), we plot the evolution of the average signal-to-noise ratio (SNR) $20 \log_{10} \|\bs x\|/\|\bs x - \hat{\bs x}\|$ (under the normalization hypothesis $\|\bs A \bs x\|_1 = \kappa \sqrt m$) as a function of $m/s$. For each sampled value of $\tau$, we display one average SNR curve colored according to the colorbar on the right. As expected, from $m/s \simeq 3$, the quality of the BPDN estimate improves, with an SNR level growing linearly with the increase of $-\log_{10} \tau/\pi$ (\ie for $\tau$ decaying logarithmically). This is involved by taking the logarithm of both sides of the instance optimality relation \eqref{eq:l2-l1-inst-opt} respected by BPDN.

\section{Conclusion and Perspectives}
\label{sec:conclusion}

With this work, we have shown that one can perfectly reconstruct the direction of a low-complexity signal belonging to a symmetric cone from the phase of its complex Gaussian random projections, \ie in the non-linear PO-CS model~\eqref{eq:PO-CS}. Inspired by~\cite{boufounos2013sparse}, this is achieved by formulating an equivalent linear sensing model defined from a signal normalization constraint, and a phase-consistency constraint with the observed signal phases in the random projection domain.  This reformulation allows us to perform this reconstruction with any instance optimal algorithm provided that the associated sensing matrix respects the RIP property over the considered low-complexity signal set. In addition, we inherit the stability and the robustness from such algorithms; the PO-CS model can be corrupted by a bounded noise with small noise level, and the observed signal is only required to be well approximated by an element of a low-complexity set (with small modeling error). We proved that a complex Gaussian random sensing matrix fulfills \whp the conditions allowing such a reconstruction; it leads to an equivalent sensing matrix (in the reformulated model) that respects \whp the restricted isometry property provided that the number of measurements is large compared to the complexity of the signal set, with an additional dependence to the noise level for noisy PO-CS.   

We can think of (at least) two open questions that are worth being investigated in future work. First, all the reconstruction guarantees involving the considered random matrices are non-uniform; they are valid \whp given the observed signal. However, up to an increase of the exponent of $1/\delta$ in the sample-complexity condition \eqref{eq:sample-complexity}, and in the case where the signal belongs to a low-complexity set $\cl K$ (zero modeling error), it should be possible to use the sign-product embedding~\cite{foucart2016flavors} and its extension to the complex field~\cite{FDVJ19} in order to ensure that, \whp, $\bs A_{\bs z}$ (with $\bs z = \signc(\bs A \bs x)$) is RIP for all~$\bs x \in \cl K$. 

Second, in several imaging applications (such as radar, magnetic resonance imaging, computed tomography), the complex sensing matrix representing the observational model is structured and amounts to (randomly) sub-sampling the rows of a Fourier matrix. In these contexts, phase-only compressive sensing provides an appealing sensing alternative, for instance, making the measurements insensitive to large amplitude variations and easing the measurement quantization. In correspondence with the inspirational work of Oppenheim~\cite{Oppenheim_1981,Oppenheim_1982}, a critical open question is thus to extend the RIP stated in Thm~\ref{thm:rip-for-Ax} and Thm~\ref{thm:noisy-rip} to partial random Fourier matrices or to other structured random sensing constructions. A first breakthrough in this direction would be to verify this extension for partial Gaussian circulant matrices, which satisfy \whp the $(\ell_1,\ell_2)$-RIP and are applicable to one-bit CS~\cite{Dirksen_2019}.

\footnotesize

\end{document}